\documentclass[11pt]{article}

\usepackage[a4paper,margin=29mm]{geometry}
\usepackage[T1]{fontenc}
\usepackage[utf8]{inputenc}
\usepackage{lmodern}
\usepackage{microtype}
\usepackage{amsmath,amssymb,amsfonts,amsthm,mathtools,mathrsfs}
\usepackage{aliascnt}
\usepackage{bm}
\usepackage{enumitem}
\usepackage{booktabs}
\usepackage{array}
\usepackage{csquotes}
\usepackage[backend=biber,style=authoryear,natbib=true,maxcitenames=2,maxbibnames=99,uniquelist=false,uniquename=init,giveninits=true,doi=true,url=false,isbn=false]{biblatex}
\usepackage{xcolor}
\usepackage{hyperref}
\usepackage[nameinlink,noabbrev]{cleveref}

\crefname{theorem}{theorem}{theorems}
\Crefname{theorem}{Theorem}{Theorems}
\crefname{lemma}{lemma}{lemmas}
\Crefname{lemma}{Lemma}{Lemmas}
\crefname{definition}{definition}{definitions}
\Crefname{definition}{Definition}{Definitions}
\crefname{assumption}{assumption}{assumptions}
\Crefname{assumption}{Assumption}{Assumptions}
\crefname{proposition}{proposition}{propositions}
\Crefname{proposition}{Proposition}{Propositions}
\crefname{corollary}{corollary}{corollaries}
\Crefname{corollary}{Corollary}{Corollaries}
\crefname{remark}{remark}{remarks}
\Crefname{remark}{Remark}{Remarks}

\hypersetup{
  colorlinks=true,
  linkcolor=blue!45!black,
  citecolor=blue!45!black,
  urlcolor=blue!45!black,
  pdftitle={Spending Operators and Weak Power-Port Decompositions for Path-Dependent Entropic Lagrangians},
  pdfauthor={Huilong Ren},
  pdfsubject={Calibrated terminal spending, diffusion power-port decomposition, recovery of local species balance, and a single thermo-diffusion functional},
  pdfkeywords={terminal variation, cocycle calibration, weak Leibniz rule, chemical-potential weighting, local species balance, Cahn--Hilliard equation}
}

\setlist[itemize]{leftmargin=2em,itemsep=0.2em,topsep=0.3em}
\setlist[enumerate]{leftmargin=2.2em,itemsep=0.2em,topsep=0.3em}
\allowdisplaybreaks

\numberwithin{equation}{section}
\newtheorem{theorem}{Theorem}[section]
\newaliascnt{proposition}{theorem}
\newtheorem{proposition}[proposition]{Proposition}
\aliascntresetthe{proposition}
\newaliascnt{lemma}{theorem}

\aliascntresetthe{lemma}
\newaliascnt{corollary}{theorem}
\newtheorem{corollary}[corollary]{Corollary}
\aliascntresetthe{corollary}
\newaliascnt{assumption}{theorem}

\aliascntresetthe{assumption}
\theoremstyle{definition}
\newaliascnt{definition}{theorem}
\newtheorem{definition}[definition]{Definition}
\aliascntresetthe{definition}
\theoremstyle{remark}
\newaliascnt{remark}{theorem}

\aliascntresetthe{remark}

\newcommand{\R}{\mathbb{R}}
\newcommand{\dd}{\,\mathrm{d}}
\newcommand{\Div}{\operatorname{div}}
\newcommand{\grad}{\nabla}
\newcommand{\Ran}{\operatorname{Ran}}
\newcommand{\Ker}{\operatorname{Ker}}
\newcommand{\Int}{\operatorname{int}}
\newcommand{\Dcal}{\mathcal{D}}
\newcommand{\Ecal}{\mathcal{E}}

\newcommand{\Lcal}{\mathfrak{L}}
\newcommand{\dual}[2]{\left\langle #1,#2\right\rangle}
\newcommand{\norm}[1]{\left\lVert #1\right\rVert}
\newcommand{\abs}[1]{\left\lvert #1\right\rvert}
\newcommand{\set}[1]{\left\{#1\right\}}
\newcommand{\eps}{\varepsilon}

\title{\textbf{\Large\parbox{0.96\textwidth}{\centering
Spending Operators and Weak Power-Port Decompositions\\[-0.15em]
for Path-Dependent Entropic Lagrangians}}}
\author{Huilong Ren\\
College of Civil Engineering, Tongji University, Shanghai, China\\
\texttt{hlren@tongji.edu.cn}}
\date{}

\begin{document}
\maketitle

\begin{abstract}
History-dependent entropic variational formulations require a calibrated terminal differential for accumulated power and a spatial power split that remains meaningful for weak diffusion fields. Endpoint calibration and cocycle additivity determine a unique oriented spending increment, while independent local selector fields give its distributional channel form. The diffusion identity for a potential-weighted flux is established at $H(\Div)$ regularity and extended to the finite-energy class $H^1\times L^2$ through a distributional balance component. For regular diffusion models, the weighted species channel yields the local balance whenever persistent zero-potential states are dynamically isolated in the admissible state class. An independent multiplier extends the same balance to the natural weak space. These ingredients are assembled in one synchronized thermo-diffusion functional whose directional stationarity yields energetic and thermal conjugacy, species balance, flux closure, the entropy equation, and both terminal routing rules. A Cahn--Hilliard specialization verifies the construction at finite-energy regularity.
\end{abstract}

\noindent\textbf{Keywords:} terminal variation; cocycle calibration; weak Leibniz rule; chemical-potential weighting; local species balance; Cahn--Hilliard equation.

\noindent\textbf{MSC 2020:} 49J40; 49S05; 35K55; 35D30; 46N20; 74A15.

\section{Introduction}\label{sec:introduction}

Variational descriptions of irreversible evolution have developed from several choices of primary structure. Onsager's reciprocal relations and Biot's irreversible thermodynamics express dissipative evolution through force--rate pairings, while generalized standard materials and incremental constitutive principles encode material response by convex functions of rates or internal variables \citep{Onsager1931a,Onsager1931b,Biot1955,HalphenNguyen1975,OrtizStainier1999,Doi2011}. Generalized-gradient systems extend this idea to nonlinear and nonsmooth evolutions and provide a precise energy--dissipation structure once the state space, energy, and balance operator have been fixed \citep{MielkePeletierRenger2014,Mielke2023}. These theories supply constitutive contact and stability principles, but the accumulated process power is normally evaluated along the resulting trajectory rather than treated through an independently calibrated terminal variation.

Geometric and action-based formulations resolve complementary aspects of irreversible dynamics. Port-Hamiltonian and Stokes--Dirac structures organize bulk--boundary power exchange and have recently been developed in Hilbert-space settings that make the boundary dualities explicit \citep{vanDerSchaftMaschke2002,LeGorrecZwartMaschke2005,BrugnoliHaineMatignon2023}. Doubled-variable actions and constrained thermodynamic principles enlarge the variation space in order to accommodate nonconservative forces and entropy production \citep{Galley2013,FukagawaFujitani2012,GayBalmazYoshimura2017a,GayBalmazYoshimura2017b}. Recent extensions treat thermodynamic fluxes as independent variables, incorporate objective heat-flux rates, and formulate general dissipative fluid systems by differential forms \citep{GayBalmaz2025Extended,ManachPerenouGayBalmaz2026}. These developments demonstrate that irreversible processes can be embedded in a variational framework, while leaving open a separate question that is central here: the first-order routing of an already accumulated power into distinct terminal channels.

Diffusion introduces an additional analytical issue. The Cahn--Hilliard equation can be derived from microforce balances, energetic variational principles, and mixed rate formulations \citep{CahnHilliard1958,Gurtin1996,GurtinPolignoneVinals1996,MieheHildebrandBoger2014,LiuWu2019}. In weak solutions, however, the natural regularity is often only a chemical potential in $H^1(\Omega)$ and a flux in $L^2(\Omega;\mathbb R^d)$. The classical product formula for a potential-weighted flux must then be interpreted distributionally, and the normal trace must be retained when boundary power is present. The corresponding tools come from $H(\Div)$ theory, divergence-measure fields, and modern pairing formulas \citep{GiraultRaviart1986,ChenFrid1999,ChenComiTorres2019,ComiCrastaDeCiccoMalusa2024}. This weak-product issue is not merely notational: it determines whether the balance-side and entropy-side parts of the diffusion port are meaningful in the function spaces used by the evolution problem.

The Path-Dependent Entropic Lagrangian framework was introduced for irreversible thermomechanical systems and subsequently developed as a path-dependent variational formulation of irreversible thermodynamics \citep{ren2025path,ren2026variational}. Its defining operation is to route one accumulated process power through complementary terminal channels while deriving balance and thermodynamic conjugacy from the same variational construction. The present paper isolates the mathematical structures needed for that operation. First, synchronization of terminal variables determines only a diagonal set and does not fix the amplitude of the channel differential. Second, terminal selectors must be local test fields if their stationarity is to yield local equations. Third, the diffusion port requires a weak Leibniz decomposition compatible with finite-energy fields. Finally, the chemical-potential-weighted species channel must be connected to local conservation through the constitutive structure of zero-potential states, with an independent weak formulation available when the regularity needed for that argument is absent.

The main results follow this sequence. Endpoint calibration and the cocycle identity determine a unique oriented history extension and hence the spending rule, both for scalar terminal directions and for independent local selector fields. The potential-weighted diffusion flux is decomposed first at $H(\Div)$ regularity, including its boundary power, and then at the finite-energy level through a distributional balance component. A constitutive recovery theorem gives sufficient conditions under which the weighted species channel yields the local continuity equation, while a mixed multiplier provides the same balance directly in the natural dual space. These ingredients are then assembled in one synchronized thermo-diffusion functional. Its directional stationarity generates energetic and thermal conjugacy, local species balance, diffusion and heat-flux laws, the entropy equation, and the two terminal routing formulas. A Cahn--Hilliard specialization verifies the construction at finite-energy regularity.

The remainder of the paper is outlined as follows. \Cref{sec:setting} fixes the analytical setting and distinguishes vertical path variation from the history-locked terminal derivative. \Cref{sec:spending} establishes the calibrated spending rule and its localized selector fields. \Cref{sec:spatial} develops the Sobolev and finite-energy diffusion-port decompositions. \Cref{sec:balance} gives the constitutive recovery of local balance and the mixed weak completion. \Cref{sec:energy} records the energy balance and its relation to process power. \Cref{sec:single} constructs the synchronized thermo-diffusion functional, and \cref{sec:ch} gives the Cahn--Hilliard realization. \Cref{sec:conclusion} summarizes the conclusions; auxiliary multichannel and measure-valued extensions are collected in the appendices.

\section{Analytical setting}\label{sec:setting}

\subsection{Power and constitutive contact}

Let $X$ be a real reflexive Banach space with dual $X^*$. A path
\begin{equation}\label{eq:path_rate}
  z\in W^{1,1}(0,T;X),
  \qquad v(t):=\dot z(t)\in X\quad\text{for a.e. }t,
\end{equation}
has a conjugate effort $r(t)\in X^*$. The scalar power is
\begin{equation}\label{eq:power_definition}
  D(t):=\dual{r(t)}{v(t)}_{X^*,X}.
\end{equation}
Its sign is constitutive.

When constitutive contact is represented by a proper, convex, lower-semicontinuous rate-power function $\mathcal R:X\to[0,+\infty]$ with $\mathcal R(0)=0$, the standard Fenchel relation gives
\begin{equation}\label{eq:fenchel_contact}
  r\in\partial\mathcal R(v)
  \quad\Longleftrightarrow\quad
  \dual{r}{v}=\mathcal R(v)+\mathcal R^*(r)\ge0.
\end{equation}
This step identifies a nonnegative process power; it does not fix terminal orientation or the spatial channel decomposition \citep{Rockafellar1970,EkelandTemam1999,MielkePeletierRenger2014}. In the continuum examples below, the local constitutive functions are called \emph{flux power densities}.

\subsection{Vertical and terminal variations}

A vertical path variation is generated by an admissible curve
\begin{equation}\label{eq:vertical_variation_curve}
  z_\eps=z+\eps\eta,
  \qquad \eta\in T_z\mathcal X.
\end{equation}
For a smooth integrand $P=P(z,v,t)$, differentiation of a history with a moving endpoint gives
\begin{align}
 &\left.\frac{\dd}{\dd\eps}\right|_{\eps=0}
 \int_{t_0}^{t+\eps\xi}
 P\bigl(z_\eps(\tau),\dot z_\eps(\tau),\tau\bigr)\dd\tau
 \notag\\
 &\qquad=
 \int_{t_0}^{t}
 \left[D_zP(z,\dot z,\tau)[\eta]
       +D_vP(z,\dot z,\tau)[\dot\eta]\right]\dd\tau
 +P(z(t),\dot z(t),t)\xi,
 \label{eq:full_path_variation}
\end{align}
whenever differentiation under the integral is justified. The terminal operation used in this paper is the restriction $\eta=0$ in \eqref{eq:full_path_variation}; it excludes interior vertical variation by definition.

For $D\in L^1(t_0,T)$, set
\begin{equation}\label{eq:history_primitive}
  H_D(t):=\int_{t_0}^{t}D(\tau)\dd\tau.
\end{equation}
Then $H_D\in W^{1,1}(t_0,T)$ and $\dot H_D=D$ almost everywhere. If the power is generated along a realized path, write $D_z(t)=\dual{r_z(t)}{\dot z(t)}$ and $\mathscr H[z_{\le t}]=H_{D_z}(t)$.

\begin{definition}[History-locked terminal derivative]\label{def:history_locked_derivative}
Fix $t\in(t_0,T)$, a realized history, and its associated power $D_z\in L^1(t_0,T)$. The history-locked terminal derivative is
\begin{equation}\label{eq:history_locked_derivative}
  \partial_t^{\rm hl}\mathscr H[z_{\le t}]
  :=\lim_{h\to0,\;t+h\in(t_0,T)}
  \frac{H_{D_z}(t+h)-H_{D_z}(t)}{h},
\end{equation}
whenever the limit exists. The function $D_z$ is held fixed in this operation; no vertical perturbation of $z$ or of the constitutive law is taken inside the accumulated history.
\end{definition}

At every Lebesgue point of $D_z$,
\begin{equation}\label{eq:terminal_power_identity}
  \partial_t^{\rm hl}\mathscr H[z_{\le t}]=D_z(t).
\end{equation}
The distinction between vertical and terminal directions parallels the separation of path and time derivatives in nonanticipative path calculus, although no stochastic calculus is used here \citep{ContFournie2013}.

\section{Calibrated terminal spending}\label{sec:spending}

\subsection{Oriented history extension}

Let
\begin{equation}\label{eq:synchronization_diagonal}
  \Delta:=\set{(t,t):t\in(t_0,T)}.
\end{equation}
If a $C^1$ function $\widetilde\Gamma$ vanishes on $\Delta$, then
$D\widetilde\Gamma(t,t)=a(t)(1,-1)$ for some scalar $a(t)$, because the differential annihilates the tangent direction $(1,1)$. The diagonal condition determines the conormal direction but not its amplitude. Endpoint calibration is therefore an additional datum.

\begin{definition}[Calibrated oriented history extension]\label{def:calibrated_extension}
Let $D\in L^1(t_0,T)$. A map $\Gamma:[t_0,T]^2\to\R$ is a calibrated oriented history extension of $D$ if
\begin{enumerate}[label=\textup{(\roman*)}]
  \item $\Gamma(a,c)=\Gamma(a,b)+\Gamma(b,c)$ for all $a,b,c$;
  \item for every fixed $b$, the map $a\mapsto\Gamma(a,b)$ is absolutely continuous;
  \item at every Lebesgue point $t$ of $D$,
  \begin{equation}\label{eq:endpoint_calibration}
    \lim_{h\to0}\frac{\Gamma(t+h,t)}{h}=D(t),
  \end{equation}
  whenever $t+h\in[t_0,T]$.
\end{enumerate}
\end{definition}

\begin{theorem}[Uniqueness and spending rule]\label{thm:calibrated_spending}
For every $D\in L^1(t_0,T)$ there is exactly one calibrated oriented history extension:
\begin{equation}\label{eq:calibrated_extension_formula}
  \Gamma_D(t_s,t_c)
  =H_D(t_s)-H_D(t_c)
  =\int_{t_c}^{t_s}D(\tau)\dd\tau.
\end{equation}
At every Lebesgue point $t$ of $D$,
\begin{equation}\label{eq:spending_differential}
  D\Gamma_D(t,t)[\delta t_s,\delta t_c]
  =D(t)\delta t_s-D(t)\delta t_c.
\end{equation}
The common direction is null: $D\Gamma_D(t,t)[\xi,\xi]=0$.
\end{theorem}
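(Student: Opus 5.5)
The plan is to prove existence by direct verification and uniqueness by reducing any calibrated extension to a one-variable function through the cocycle identity.

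\emph{Existence.} Set $\Gamma_D(t_s,t_c):=H_D(t_s)-H_D(t_c)$. Property (i) follows by telescoping. Property (ii) holds because $H_D\in W^{1,1}(t_0,T)$ is absolutely continuous. For (iii), $\Gamma_D(t+h,t)/h$ is the average of $D$ over the interval between $t$ and $t+h$, and this average tends to $D(t)$ at every Lebesgue point, as in \eqref{eq:terminal_power_identity}.

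\emph{Uniqueness.} Let $\Gamma$ be any calibrated extension.
\begin{enumerate}[label=\textup{(\alph*)}]
  \item Taking $a=b=c$ in (i) gives $\Gamma(a,a)=0$.
  \item Taking $c=a$ then gives $\Gamma(b,a)=-\Gamma(a,b)$.
  \item Define $G(a):=\Gamma(a,t_0)$. Property (ii) with $b=t_0$ makes $G$ absolutely continuous, and $G(t_0)=0$.
  \item By (i) with $c=t_0$ and the antisymmetry from (b), $\Gamma(a,b)=G(a)-G(b)$ for all $a,b$.
  \item Hence $\Gamma(t+h,t)=G(t+h)-G(t)$, so (iii) says $G'(t)=D(t)$ at every Lebesgue point of $D$, that is, almost everywhere.
  \item Since $G$ is absolutely continuous, $G(a)=G(t_0)+\int_{t_0}^a G'=H_D(a)$.
  \item Therefore $\Gamma=\Gamma_D$.
\end{enumerate}
The only step needing care is (f): recovering $G$ from its a.e. derivative requires absolute continuity, which (ii) supplies. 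Without it, a Cantor-type function could be added to $G$, so this is where (ii) is really used.

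\emph{Differential.} At a Lebesgue point $t$, $H_D$ is differentiable with $H_D'(t)=D(t)$. The map $(t_s,t_c)\mapsto H_D(t_s)-H_D(t_c)$ is a difference of functions of separate variables, each differentiable at $t$. Its error term is therefore $o(|\delta t_s|)+o(|\delta t_c|)$, which gives Fréchet differentiability at $(t,t)$ with differential $D(t)\delta t_s-D(t)\delta t_c$, i.e. \eqref{eq:spending_differential}. Substituting $\delta t_s=\delta t_c=\xi$ gives $D\Gamma_D(t,t)[\xi,\xi]=0$. This is consistent with $\Gamma_D$ vanishing on $\Delta$, as noted before \cref{def:calibrated_extension}.
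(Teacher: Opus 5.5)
Your proposal is correct and follows the paper's proof. Both arguments set $G(a)=\Gamma(a,t_0)$, use the cocycle identity to write $\Gamma(a,b)=G(a)-G(b)$ with $G(t_0)=0$, combine absolute continuity with endpoint calibration to get $G=H_D$, and then differentiate at $(t,t)$. You also spell out the existence check and the Fr\'echet differentiability on the diagonal, which the paper leaves implicit; in step (d) the antisymmetry is not needed, since (i) with $c=t_0$ gives the difference form directly.
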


\begin{proof}
Let $F(t)=\Gamma(t,t_0)$. The cocycle identity gives $\Gamma(a,b)=F(a)-F(b)$ and also $F(t_0)=0$. By absolute continuity and \eqref{eq:endpoint_calibration}, $F'=D$ almost everywhere, hence $F=H_D$. Differentiating \eqref{eq:calibrated_extension_formula} at $(t,t)$ gives \eqref{eq:spending_differential}.
\end{proof}

The argument separates the two ingredients of the construction. The cocycle identity reduces every admissible two-endpoint history to a difference of one primitive, whereas endpoint calibration fixes the derivative of that primitive. Hence the opposite signs in \eqref{eq:spending_differential} follow from orientation of the endpoints and are not prescribed independently after synchronization.

The directly usable notation is
\begin{equation}\label{eq:spending_rule_citable}
\boxed{
  \delta\int_{t_0}^{t}D(\tau)\dd\tau
  =D(t)\delta t_s-D(t)\delta t_c.
}
\end{equation}
Here $\delta$ denotes the diagonal differential of the calibrated extension, not two unrelated derivatives of one scalar upper limit. Equivalently, define
\begin{equation}\label{eq:spending_epsilon_family}
  H_\eps:=H_D(t)+
  \Gamma_D(t+\eps\delta t_s,t+\eps\delta t_c).
\end{equation}
Then $H_0=H_D(t)$ and $H_\eps'|_{\eps=0}$ is the right-hand side of \eqref{eq:spending_rule_citable}. This provides an ordinary one-parameter directional-derivative realization of the notation.

\subsection{Spending operator and trajectory-generated directions}

For a fixed rate $v\in X$, set
\begin{equation}\label{eq:compatibility_space}
  K_v:=\set{(v\xi,\xi):\xi\in\R}\subset X\times\R,
\end{equation}
and define
\begin{equation}\label{eq:spending_operator}
  \mathfrak S_v:X^*\to X^*\times\R,
  \qquad
  \mathfrak S_v(r):=\bigl(r,-\dual{r}{v}\bigr).
\end{equation}
A direct calculation gives
\begin{equation}\label{eq:spending_operator_range}
  K_v^\perp=\Ran\mathfrak S_v
  =\set{\bigl(r,-\dual{r}{v}\bigr):r\in X^*}.
\end{equation}
Consequently, if $D=\dual{r}{v}$ and $\eta=v\delta t_c$, then
\begin{equation}\label{eq:power_virtual_work}
  D\delta t_s-D\delta t_c
  =D\delta t_s-\dual{r}{\eta}.
\end{equation}
The relation $\eta=v\delta t_c$ selects one direction generated by the realized trajectory. It does not parametrize the full state-variation space. In vector mechanics, arbitrary virtual displacements remain independent directions. Moreover, the scalar $D$ does not determine $r$: every functional annihilating $v$ can be added to $r$ without changing $D$.

\subsection{Localized selector fields}

Let $I\subset\R$ be open, let $\Omega$ have finite measure, and let
$p\in C(I;L^1(\Omega))$ be a jointly measurable power density. For measurable endpoint fields define
\begin{equation}\label{eq:localized_history_increment}
  \Gamma_p[\tau_s,\tau_c]
  :=\int_\Omega\int_{\tau_c(x)}^{\tau_s(x)}
  p(x,\sigma)\dd\sigma\dd x.
\end{equation}

\begin{theorem}[Localized spending rule]\label{thm:localized_spending}
Fix $t\in I$ and let $\delta t_s,\delta t_c\in L^\infty(\Omega)$. For sufficiently small $\eps$, set
\begin{equation}\label{eq:localized_endpoint_fields}
  \tau_s^\eps=t+\eps\delta t_s,
  \qquad
  \tau_c^\eps=t+\eps\delta t_c.
\end{equation}
Then
\begin{equation}\label{eq:localized_spending_derivative}
\boxed{
  \left.\frac{\dd}{\dd\eps}\right|_{\eps=0}
  \Gamma_p[\tau_s^\eps,\tau_c^\eps]
  =\int_\Omega p(x,t)(\delta t_s-\delta t_c)\dd x.
}
\end{equation}
\end{theorem}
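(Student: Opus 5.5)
The plan is to reduce the derivative to a difference of two one-sided terminal increments, each handled by continuity of $p$ in $L^1(\Omega)$ together with the boundedness of the selector fields. First I split the inner integral at the common time $t$:
\begin{equation*}
  \Gamma_p[\tau_s^\eps,\tau_c^\eps]
  =\int_\Omega\int_{t}^{t+\eps\delta t_s(x)}p(x,\sigma)\dd\sigma\dd x
  -\int_\Omega\int_{t}^{t+\eps\delta t_c(x)}p(x,\sigma)\dd\sigma\dd x .
\end{equation*}
This is the pointwise cocycle identity of \cref{thm:calibrated_spending} applied at each $x$. Since $\Gamma_p[\tau^0_s,\tau^0_c]=0$, it suffices to prove, for an arbitrary $g\in L^\infty(\Omega)$,
\begin{equation*}
  \lim_{\eps\to0}\frac1\eps\int_\Omega\int_t^{t+\eps g(x)}p(x,\sigma)\dd\sigma\dd x=\int_\Omega p(x,t)g(x)\dd x ,
\end{equation*}
and then take $g=\delta t_s$ and $g=\delta t_c$. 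Here ``sufficiently small $\eps$'' means $\abs{\eps}\norm{g}_{L^\infty}<\operatorname{dist}(t,\partial I)$, so every interval stays inside a compact subinterval $J\subset I$. Joint measurability and $p\in C(J;L^1(\Omega))$ give $\int_J\int_\Omega\abs{p}\dd x\dd\sigma<\infty$. Fubini--Tonelli therefore makes the expression well defined and measurable in $x$, including the case where $\tau_s,\tau_c$ are merely measurable.

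For the key estimate I substitute $\sigma=t+\eps g(x)\theta$ with $\theta\in[0,1]$. This gives
\begin{equation*}
  \frac1\eps\int_t^{t+\eps g(x)}p(x,\sigma)\dd\sigma=g(x)\int_0^1p\bigl(x,t+\eps g(x)\theta\bigr)\dd\theta ,
\end{equation*}
and hence the error is bounded by
\begin{equation*}
  \norm{g}_{L^\infty}\int_0^1\int_\Omega\abs{p\bigl(x,t+\eps g(x)\theta\bigr)-p(x,t)}\dd x\dd\theta .
\end{equation*}
The main obstacle is that the time argument $t+\eps g(x)\theta$ depends on $x$, so continuity of $s\mapsto p(\cdot,s)$ in $L^1$ does not apply directly: we need $L^1$-continuity of $p$ along the ``curved'' time slices $x\mapsto t+h(x)$ with $\norm{h}_\infty\to0$.

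To handle this, I would first approximate $g$ in $L^\infty$ by simple functions $g_k=\sum_j c_j\mathbf 1_{A_j}$. On each $A_j$ the shift is the constant $\eps c_j\theta$, so the inner integral is at most $\norm{p(\cdot,t+\eps c_j\theta)-p(\cdot,t)}_{L^1}$, which tends to $0$ uniformly in $\theta$ by continuity. Replacing $g$ by $g_k$ costs an error that is not directly controlled by $\norm{g-g_k}_\infty$ alone. A cleaner route is to avoid the curved slices entirely and use a Lebesgue-differentiation type bound in time. Write $\omega(\rho):=\sup_{\abs{s}\le\rho}\norm{p(\cdot,t+s)-p(\cdot,t)}_{L^1}$, which tends to $0$ as $\rho\to0$ by continuity. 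Fubini then gives
\begin{equation*}
  \int_\Omega\Bigl|\frac1\eps\int_t^{t+\eps g(x)}\bigl(p(x,\sigma)-p(x,t)\bigr)\dd\sigma\Bigr|\dd x
  \le\frac1{\abs\eps}\int_{t-\abs\eps M}^{t+\abs\eps M}\norm{p(\cdot,\sigma)-p(\cdot,t)}_{L^1}\dd\sigma\le2M\,\omega(\abs\eps M),
\end{equation*}
where $M=\norm{g}_{L^\infty}$. The key step is enlarging each $x$-dependent interval to the fixed interval $[t-\abs\eps M,t+\abs\eps M]$ before exchanging the integrals; the absolute value inside is what makes this enlargement legitimate. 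This bound tends to $0$ and yields the limit above.

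Adding the two terminal contributions with opposite signs, as dictated by the orientation in the splitting step, then gives \eqref{eq:localized_spending_derivative}. The result is a two-sided derivative, since the estimate holds for both signs of $\eps$.
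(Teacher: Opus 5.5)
Your proof is correct. Once you drop the exploratory detours through curved time slices and simple-function approximation, it is exactly the paper's argument: reduce to one endpoint by splitting at $t$, then enlarge each $x$-dependent interval to $[t-\abs{\eps}M,t+\abs{\eps}M]$ before applying Fubini, which gives the bound $2M\sup_{\abs{\sigma-t}\le\abs{\eps}M}\norm{p(\cdot,\sigma)-p(\cdot,t)}_{L^1}$. Your remarks on integrability over a compact subinterval and on measurability of the $x$-dependent inner integrals fill in details that the paper leaves implicit.
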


\begin{proof}
It suffices to treat one endpoint. For $h\in L^\infty(\Omega)$, let
\[
  F_\eps(h):=\int_\Omega\int_t^{t+\eps h(x)}p(x,\sigma)\dd\sigma\dd x,
  \qquad M:=\norm{h}_{L^\infty}.
\]
Fubini's theorem yields
\begin{align*}
  \left|\frac{F_\eps(h)}{\eps}
  -\int_\Omega p(x,t)h(x)\dd x\right|
  &\le \frac{1}{|\eps|}
  \int_{t-|\eps|M}^{t+|\eps|M}
  \norm{p(\cdot,\sigma)-p(\cdot,t)}_{L^1}\dd\sigma\\
  &\le 2M\sup_{|\sigma-t|\le |\eps|M}
  \norm{p(\cdot,\sigma)-p(\cdot,t)}_{L^1},
\end{align*}
which tends to zero. Applying the calculation to both endpoints proves the claim.
\end{proof}

The estimate shows why continuity with values in $L^1(\Omega)$ is sufficient: only the behavior of the power density in a shrinking time neighborhood of the terminal point is used. No pointwise continuity in $x$ is required. The bounded selector fields determine the local terminal direction, while compactly supported smooth selectors are used below when the resulting coefficients are identified as distributions.

For interior localization at a fixed time, the selectors are taken independently in
\begin{equation}\label{eq:local_selector_space}
  \delta t_c,\delta t_s\in C_c^\infty(\Omega).
\end{equation}
They are virtual test fields with the dimension of time, not additional physical time coordinates. If
$\int_\Omega(A_c\delta t_c+A_s\delta t_s)\dd x=0$ for all such pairs, then $A_c=A_s=0$ in $\Dcal'(\Omega)$. The weak balance below is tested directly against the full space $H^1(\Omega)$.

\section{Diffusion power-port decomposition}\label{sec:spatial}

Let $\Omega\subset\R^d$ be a bounded Lipschitz domain.  The chemical potential is denoted by $\mu$ and the diffusion flux by $\bm j$.  The smooth identity
\begin{equation}\label{eq:smooth_leibniz}
  \Div(\mu\bm j)
  =\mu\Div\bm j+\bm j\cdot\grad\mu
\end{equation}
is a local Leibniz rule.  Its interpretation as a channel decomposition requires a regularity class, a boundary convention, and independent terminal selectors.

\subsection{Sobolev fields and boundary power}

Recall
\begin{equation}\label{eq:Hdiv_definition}
  H(\Div;\Omega)
  :=\set{\bm j\in L^2(\Omega;\R^d):\Div\bm j\in L^2(\Omega)}.
\end{equation}
For $\bm j\in H(\Div;\Omega)$, its normal trace $\gamma_n\bm j\in H^{-1/2}(\partial\Omega)$ satisfies
\begin{equation}\label{eq:Hdiv_green}
  \int_\Omega \bm j\cdot\grad\eta\dd x
  +\int_\Omega (\Div\bm j)\eta\dd x
  =\dual{\gamma_n\bm j}{\gamma\eta}_{H^{-1/2},H^{1/2}}
\end{equation}
for every $\eta\in H^1(\Omega)$ \citep{GiraultRaviart1986,BrezziFortin1991}.

\begin{theorem}[Sobolev product rule and boundary port]\label{thm:sobolev_product}
Let $\mu\in H^1(\Omega)$ and $\bm j\in H(\Div;\Omega)$.  Then
$\mu\bm j$, $\mu\Div\bm j$, and $\bm j\cdot\grad\mu$ belong to $L^1(\Omega)$, and
\begin{equation}\label{eq:sobolev_product_rule}
  \Div(\mu\bm j)
  =\mu\Div\bm j+\bm j\cdot\grad\mu
  \qquad\text{in }\Dcal'(\Omega).
\end{equation}
Moreover,
\begin{equation}\label{eq:sobolev_boundary_port}
  \int_\Omega\mu\Div\bm j\dd x
  +\int_\Omega\bm j\cdot\grad\mu\dd x
  =\dual{\gamma_n\bm j}{\gamma\mu}_{H^{-1/2},H^{1/2}}.
\end{equation}
\end{theorem}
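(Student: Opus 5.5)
The plan is to obtain both statements from the Green formula \eqref{eq:Hdiv_green}, first with compactly supported smooth test functions to get the distributional identity, and then with $\eta=\mu$ itself for the boundary port.

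\textbf{Integrability.} Since $\mu,\grad\mu\in L^2$ and $\bm j,\Div\bm j\in L^2$, the Cauchy--Schwarz inequality gives that $\mu\bm j$, $\mu\Div\bm j$ and $\bm j\cdot\grad\mu$ lie in $L^1(\Omega)$. In particular $\Div(\mu\bm j)$ is defined in $\Dcal'(\Omega)$.

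\textbf{Distributional identity.} Fix $\varphi\in C_c^\infty(\Omega)$. By definition,
\[
\dual{\Div(\mu\bm j)}{\varphi}=-\int_\Omega\mu\,\bm j\cdot\grad\varphi\dd x .
\]
Write $\mu\grad\varphi=\grad(\mu\varphi)-\varphi\grad\mu$. Here $\mu\varphi\in H^1_0(\Omega)$, because $\varphi$ is smooth and compactly supported and the product rule for $H^1$ functions times $C_c^\infty$ functions is standard. Apply \eqref{eq:Hdiv_green} with $\eta=\mu\varphi$. Its trace vanishes, so
\[
-\int_\Omega\bm j\cdot\grad(\mu\varphi)\dd x=\int_\Omega(\Div\bm j)\mu\varphi\dd x .
\]
Combining the two expressions gives
\[
\dual{\Div(\mu\bm j)}{\varphi}=\int_\Omega(\mu\Div\bm j+\bm j\cdot\grad\mu)\varphi\dd x ,
\]
which is \eqref{eq:sobolev_product_rule}.

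\textbf{Boundary port.} Take $\eta=\mu\in H^1(\Omega)$ directly in \eqref{eq:Hdiv_green}. This yields \eqref{eq:sobolev_boundary_port} immediately, since the pairing $\dual{\gamma_n\bm j}{\gamma\mu}$ is well defined for $\gamma\mu\in H^{1/2}(\partial\Omega)$ on a bounded Lipschitz domain.

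\textbf{Main obstacle.} The only delicate point is justifying $\mu\varphi\in H^1_0(\Omega)$ with $\grad(\mu\varphi)=\varphi\grad\mu+\mu\grad\varphi$. I would handle it by approximating $\mu$ in $H^1$ by smooth functions, which is possible on Lipschitz domains and even just locally, since $\operatorname{supp}\varphi$ is compact. One could alternatively mollify $\bm j$ inside $\operatorname{supp}\varphi$ and pass to the limit, but the Green-formula route avoids that. No additional regularity of $\mu\bm j$, for instance membership in $H(\Div)$, is claimed or needed: $\Div(\mu\bm j)$ is identified as the $L^1$ function on the right-hand side.
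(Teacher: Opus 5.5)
Your proposal is correct and follows the paper's proof essentially step for step. You get integrability from H\"older/Cauchy--Schwarz, the distributional product rule from the Green formula with $\eta=\mu\varphi\in H_0^1(\Omega)$, and the boundary port from taking $\eta=\mu$. Your added justification that $\mu\varphi\in H^1_0(\Omega)$ with the expected gradient is a detail the paper simply asserts.
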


\begin{proof}
The integrability statements follow from H\"older's inequality.  For $\varphi\in C_c^\infty(\Omega)$, the function $\mu\varphi$ belongs to $H_0^1(\Omega)$. Hence
\[
  0=\int_\Omega (\Div\bm j)\mu\varphi\dd x
    +\int_\Omega\bm j\cdot\grad(\mu\varphi)\dd x,
\]
and expansion of the gradient gives \eqref{eq:sobolev_product_rule}.  Equation \eqref{eq:sobolev_boundary_port} is \eqref{eq:Hdiv_green} with $\eta=\mu$.
\end{proof}

Thus the local product rule and the boundary power identity are two parts of the same statement. In an open system the volume contributions do not cancel by themselves; their sum is exactly the normal chemical-power port. This is why the trace term is retained before imposing a closed-port condition.

Set
\begin{equation}\label{eq:sobolev_port_components}
  P_c:=\mu\Div\bm j,
  \qquad
  P_s:=\bm j\cdot\grad\mu.
\end{equation}
The subscripts refer to the conserved-variable and entropy channels.  Equation \eqref{eq:sobolev_boundary_port} shows that their spatial integrals sum to boundary chemical power.  They cancel globally only for a closed port, for example $\gamma_n\bm j=0$.

\subsection{Channelwise terminal variation}

Assume that $P_c,P_s\in C(I;L^1(\Omega))$ on an open time interval $I$. For measurable endpoint fields $\tau_c,\tau_s:\Omega\to I$, define
\begin{equation}\label{eq:split_port_extension}
\begin{aligned}
  \Gamma_{P}[\tau_c,\tau_s]
  :=&\int_\Omega\int_{t_0}^{\tau_c(x)}P_c(x,\sigma)\dd\sigma\dd x\\
    &+\int_\Omega\int_{t_0}^{\tau_s(x)}P_s(x,\sigma)\dd\sigma\dd x.
\end{aligned}
\end{equation}
On the synchronization diagonal,
\begin{equation}\label{eq:split_port_reconstruction}
  \Gamma_{P}[t,t]
  =\int_\Omega\int_{t_0}^{t}\Div(\mu\bm j)(x,\sigma)\dd\sigma\dd x.
\end{equation}

\begin{theorem}[Diffusion-port terminal rule]\label{thm:split_port_rule}
Fix $t\in I$, let $\delta t_c,\delta t_s\in L^\infty(\Omega)$, and set
\begin{equation}\label{eq:split_port_perturbation}
  G_\eps
  :=\Gamma_P[t+\eps\delta t_c,t+\eps\delta t_s].
\end{equation}
Then
\begin{equation}\label{eq:split_port_derivative}
\boxed{
  \left.\frac{\dd G_\eps}{\dd\eps}\right|_{\eps=0}
  =\int_\Omega
  \left(
    \mu\Div\bm j\,\delta t_c
    +\bm j\cdot\grad\mu\,\delta t_s
  \right)(x,t)\dd x.
}
\end{equation}
\end{theorem}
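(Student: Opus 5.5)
The plan is to reduce the statement to two separate applications of the one-endpoint estimate in the proof of \cref{thm:localized_spending}, one for each channel of \eqref{eq:split_port_extension}. First I will split off the value on the diagonal. For each $x$, additivity of the time integral gives
\[
  \int_{t_0}^{t+\eps\delta t_c(x)}P_c(x,\sigma)\dd\sigma
  =\int_{t_0}^{t}P_c(x,\sigma)\dd\sigma
  +\int_t^{t+\eps\delta t_c(x)}P_c(x,\sigma)\dd\sigma ,
\]
and the same identity holds for $P_s$. This yields $G_\eps=G_0+F^c_\eps(\delta t_c)+F^s_\eps(\delta t_s)$. Here $G_0=\Gamma_P[t,t]$ does not depend on $\eps$, and $F^{c}_\eps(h):=\int_\Omega\int_t^{t+\eps h(x)}P_c(x,\sigma)\dd\sigma\dd x$, with $F^s_\eps$ defined analogously. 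To make everything well defined I take $|\eps|\max(\norm{\delta t_c}_{L^\infty},\norm{\delta t_s}_{L^\infty})$ smaller than the distance from $t$ to $\partial I$. Then the endpoint fields take values in $I$, and joint measurability together with $P_c,P_s\in C(I;L^1(\Omega))$ justifies Fubini on the compact time window. Note that $\Omega$ is bounded here, so the finite-measure hypothesis of \cref{thm:localized_spending} holds.

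Next I will apply the Fubini estimate from the proof of \cref{thm:localized_spending}, with $p=P_c$, $h=\delta t_c$ and $p=P_s$, $h=\delta t_s$ respectively. It gives
\[
  \Bigl|\tfrac{1}{\eps}F^c_\eps(\delta t_c)-\int_\Omega P_c(x,t)\delta t_c(x)\dd x\Bigr|
  \le 2M\sup_{|\sigma-t|\le|\eps|M}\norm{P_c(\cdot,\sigma)-P_c(\cdot,t)}_{L^1},
\]
and the analogous bound for $P_s$. Continuity in $L^1(\Omega)$ sends both bounds to zero as $\eps\to0$. Since $G_0$ is constant, the derivative at $\eps=0$ equals $\int_\Omega(P_c\delta t_c+P_s\delta t_s)(x,t)\dd x$. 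Substituting the definitions \eqref{eq:sobolev_port_components} then gives \eqref{eq:split_port_derivative}.

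There is no real analytical obstacle. The one point needing care is that the two channels carry independent selectors, so the statement is not the difference form of \cref{thm:localized_spending}. Each channel has its own lower limit $t_0$, which cancels in the difference quotient, and the two one-sided estimates are simply added. I will also remark that at $\eps=0$ the identity \eqref{eq:split_port_reconstruction} follows from \cref{thm:sobolev_product} pointwise in time. It is not needed for the derivative itself.
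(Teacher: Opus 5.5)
Your proposal is correct and matches the paper's proof, which simply applies the one-endpoint estimate of \cref{thm:localized_spending} to each of the two component histories in \eqref{eq:split_port_extension}. Your explicit splitting off of the $\eps$-independent diagonal value $G_0$, and your smallness condition on $\eps$ that keeps the endpoint fields in $I$, just spell out that reduction in detail.
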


\begin{proof}
Apply \cref{thm:localized_spending} separately to the two component histories in \eqref{eq:split_port_extension}.
\end{proof}

At $\varepsilon=0$ the two endpoint fields coincide and \eqref{eq:split_port_reconstruction} recovers the original accumulated diffusion port. The selectors distinguish only the first-order channel directions. They do not introduce two physical histories, and the decomposition remains tied to the single local identity \eqref{eq:smooth_leibniz}.

The local notation used in applications is therefore
\begin{equation}\label{eq:diffusion_port_rule_citable}
\boxed{
  \delta\int_{t_0}^{t}\Div(\mu\bm j)\dd\tau
  =\mu\Div\bm j\,\delta t_c
   +\bm j\cdot\grad\mu\,\delta t_s.
}
\end{equation}
Equation \eqref{eq:diffusion_port_rule_citable} is the local shorthand for the tested identity \eqref{eq:split_port_derivative}.  It is not an ordinary derivative that assigns two unrelated values to one scalar upper limit.  The original history is recovered at $\eps=0$, while the two independent selectors appear only in the admissible perturbation family.

\subsection{Finite-energy fields}

For weak diffusion fields one often has only
\begin{equation}\label{eq:finite_energy_fields}
  \mu\in H^1(\Omega),
  \qquad
  \bm j\in L^2(\Omega;\R^d).
\end{equation}
Then $\mu\bm j\in L^1$ and $\bm j\cdot\grad\mu\in L^1$, but the ordinary product $\mu\Div\bm j$ need not be defined.

\begin{definition}[Finite-energy balance component]\label{def:diamond}
For \eqref{eq:finite_energy_fields}, define
\begin{equation}\label{eq:diamond_definition}
  \mu\diamond\Div\bm j
  :=\Div(\mu\bm j)-\bm j\cdot\grad\mu
  \qquad\text{in }\Dcal'(\Omega).
\end{equation}
Equivalently, for $\varphi\in C_c^\infty(\Omega)$,
\begin{equation}\label{eq:diamond_test}
\begin{aligned}
  \dual{\mu\diamond\Div\bm j}{\varphi}
  :=&-\int_\Omega\mu\bm j\cdot\grad\varphi\dd x\\
     &-\int_\Omega\varphi\,\bm j\cdot\grad\mu\dd x.
\end{aligned}
\end{equation}
\end{definition}

By construction,
\begin{equation}\label{eq:weak_leibniz}
  \Div(\mu\bm j)
  =\mu\diamond\Div\bm j+\bm j\cdot\grad\mu
  \qquad\text{in }\Dcal'(\Omega).
\end{equation}
If $\bm j\in H(\Div;\Omega)$, then
$\mu\diamond\Div\bm j=\mu\Div\bm j$ in distributions, so \eqref{eq:weak_leibniz} extends \eqref{eq:sobolev_product_rule}. The definition is canonical at the regularity \eqref{eq:finite_energy_fields}: both $\Div(\mu\bm j)$ and $\bm j\cdot\grad\mu$ are well-defined distributions, whereas multiplying the generally distributional quantity $\Div\bm j$ by an $H^1$ function is not available as an ordinary product.

For local selectors $\delta t_c,\delta t_s\in C_c^\infty(\Omega)$, the finite-energy channel differential is
\begin{equation}\label{eq:weak_channel_differential}
  \dual{\mu\diamond\Div\bm j}{\delta t_c}
  +\int_\Omega\bm j\cdot\grad\mu\,\delta t_s\dd x.
\end{equation}
This is the rigorous replacement of the right-hand side of \eqref{eq:diffusion_port_rule_citable} at finite energy.

\begin{proposition}[Strong--weak stability]\label{prop:diamond_stability}
Suppose
\begin{equation}\label{eq:diamond_convergence}
  \mu_n\to\mu\quad\text{strongly in }H^1(\Omega),
  \qquad
  \bm j_n\rightharpoonup\bm j\quad\text{weakly in }L^2(\Omega;\R^d).
\end{equation}
Then
\begin{equation}\label{eq:diamond_stability}
  \mu_n\diamond\Div\bm j_n
  \longrightarrow
  \mu\diamond\Div\bm j
  \qquad\text{in }\Dcal'(\Omega),
\end{equation}
and
$\bm j_n\cdot\grad\mu_n\to\bm j\cdot\grad\mu$ in distributions.
\end{proposition}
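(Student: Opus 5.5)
The plan is to test everything against a fixed $\varphi\in C_c^\infty(\Omega)$ and reduce each statement to convergence of $L^1$ integrals of products. In each product, one factor will converge strongly in $L^2$ and the other weakly in $L^2$. The basic fact I will use is the following. If $f_n\to f$ strongly in $L^2(\Omega)$ and $g_n\rightharpoonup g$ weakly in $L^2(\Omega)$, then $\int_\Omega f_n g_n\dd x\to\int_\Omega fg\dd x$. This follows from the splitting
\[
\int_\Omega f_ng_n\dd x-\int_\Omega fg\dd x=\int_\Omega (f_n-f)g_n\dd x+\int_\Omega f(g_n-g)\dd x .
\]
The first term is bounded by $\norm{f_n-f}_{L^2}\sup_n\norm{g_n}_{L^2}$, and the supremum is finite by the uniform boundedness principle. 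The second term tends to zero by the definition of weak convergence. The same argument works with vector-valued $L^2$ fields.

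I will treat the entropy-channel product $\bm j_n\cdot\grad\mu_n$ first. Tested against $\varphi$, it gives $\int_\Omega(\varphi\grad\mu_n)\cdot\bm j_n\dd x$. Since $\varphi\in L^\infty$, we have $\varphi\grad\mu_n\to\varphi\grad\mu$ strongly in $L^2(\Omega;\R^d)$. The basic fact, with $\bm j_n\rightharpoonup\bm j$ weakly, then gives convergence to $\int_\Omega\varphi\,\bm j\cdot\grad\mu\dd x$.

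Next I handle $\Div(\mu_n\bm j_n)$, which by \eqref{eq:diamond_test} is tested as $-\int_\Omega\mu_n\bm j_n\cdot\grad\varphi\dd x$. Since $\mu_n\to\mu$ strongly in $L^2$, also $\mu_n\grad\varphi\to\mu\grad\varphi$ strongly in $L^2(\Omega;\R^d)$. Pairing with the weakly convergent $\bm j_n$ gives the limit $-\int_\Omega\mu\bm j\cdot\grad\varphi\dd x$.

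Finally, the definition \eqref{eq:diamond_definition} writes $\mu_n\diamond\Div\bm j_n$ as the difference of these two distributions. Convergence in $\Dcal'(\Omega)$ therefore follows by linearity, which yields \eqref{eq:diamond_stability}.

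There is no serious obstacle in this argument. The only point to watch is that strong convergence of $\mu_n$ is genuinely needed. If both sequences converged only weakly, the products would not pass to the limit in general, so the proof must arrange every term so that exactly one factor converges strongly. The uniform $L^2$ bound on $\bm j_n$, supplied by the uniform boundedness principle, is the ingredient that makes the cross term $\int_\Omega(f_n-f)g_n\dd x$ vanish.
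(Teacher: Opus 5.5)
Your proposal is correct and follows the paper's own proof. Testing against a fixed $\varphi\in C_c^\infty(\Omega)$, you pair a strongly convergent $L^2$ factor with a weakly convergent $L^2$ factor in both terms of \eqref{eq:diamond_test} and in the force--flux product, and you additionally spell out the strong--weak product lemma (including the uniform bound from the uniform boundedness principle), which the paper leaves implicit.
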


\begin{proof}
For a fixed smooth compactly supported test function, both terms in \eqref{eq:diamond_test} are products of one strongly convergent $L^2$ factor and one weakly convergent $L^2$ factor.  The same argument applies to the force--flux product.
\end{proof}

This stability is the form needed in compactness arguments for approximate diffusion fields: strong convergence of the potential transfers the weak flux limit through both channel components without requiring separate control of $\Div\bm j_n$.

For spatially constant selectors and a closed normal port, \eqref{eq:sobolev_boundary_port} gives
\begin{equation}\label{eq:closed_scalar_spending_relation}
  \delta\int_{t_0}^{t}\int_\Omega\Div(\mu\bm j)\dd x\dd\tau
  =D_{\rm d}(t)(\delta t_c-\delta t_s),
  \qquad
  D_{\rm d}(t):=-\int_\Omega\bm j\cdot\grad\mu\dd x.
\end{equation}
For arbitrary local selectors, the two terms remain separate local test pairings and do not reduce to one scalar boundary identity.

The product rule alone does not determine the sign of $D_{\rm d}$.  A constitutive law supplies that sign.  For example, if
\begin{equation}\label{eq:onsager_flux}
  \bm j=-\bm M\grad\mu,
  \qquad
  \bm M=\bm M^{\mathsf T}\succeq0,
\end{equation}
then the local diffusion power density is
\begin{equation}\label{eq:diffusion_power_density}
  -\bm j\cdot\grad\mu
  =\grad\mu\cdot\bm M\grad\mu\ge0.
\end{equation}
Thus channel decomposition and constitutive nonnegativity are logically distinct.

\section{Recovery of local species balance}\label{sec:balance}

Let
\begin{equation}\label{eq:species_residual_strong}
  R^{\rm d}:=\dot c+\Div\bm j
\end{equation}
denote the species residual for a regular diffusion field.  The conserved-variable selector in \eqref{eq:diffusion_port_rule_citable}, together with the terminal chemical power $\mu\dot c\,\delta t_c$, gives
\begin{equation}\label{eq:weighted_species_equation}
  \mu R^{\rm d}=0.
\end{equation}
The relevant issue is the behavior of the model on a space--time region where the chemical potential vanishes identically.  The following result gives a direct sufficient condition under which the weighted channel already contains the local balance.

\begin{theorem}[Constitutive recovery of local balance]\label{thm:constitutive_recovery}
Let $Q:=\Omega\times(t_0,T)$, let $c,\mu\in C^1(Q)$ and $\bm j\in C^1(Q;\R^d)$, and let $\zeta$ denote prescribed constitutive data. Define $R^{\rm d}$ by \eqref{eq:species_residual_strong}. Suppose that:
\begin{enumerate}[label=\textup{(\roman*)}]
  \item $\mu R^{\rm d}=0$ in $Q$;
  \item the flux law has the form
  \begin{equation}\label{eq:zero_force_zero_flux}
    \bm j=\mathcal J(x,t,c,\zeta,-\grad\mu),
    \qquad
    \mathcal J(x,t,c,\zeta,0)=0;
  \end{equation}
  \item for every open cylinder $U\times I\Subset\Int_Q\set{\mu=0}$ there are Banach spaces $\mathcal X_U$ and $\mathcal Z_U$, time-independent constitutive data $\zeta_U$ satisfying $\zeta|_{U\times I}=\zeta_U$, and a map
  \begin{equation}\label{eq:local_chemical_map}
    \mathcal M_U(\,\cdot\,;\zeta_U)
    \in C^1(\mathcal X_U;\mathcal Z_U)
  \end{equation}
  such that $c|_U\in C^1(I;\mathcal X_U)$,
  $\dot c(t)|_U\in T_{c(t)}\mathcal A_U$,
  $\mu|_U=\mathcal M_U(c|_U;\zeta_U)$ in $\mathcal Z_U$, and
  \begin{equation}\label{eq:isolated_zero_state}
    \Ker D_c\mathcal M_U(c(t)|_U;\zeta_U)
    \cap T_{c(t)}\mathcal A_U=\set{0}
    \qquad\text{for every }t\in I,
  \end{equation}
  where $T_{c(t)}\mathcal A_U$ is the admissible tangent space.
\end{enumerate}
Then
\begin{equation}\label{eq:local_species_recovered}
  \dot c+\Div\bm j=0
  \qquad\text{in }Q.
\end{equation}
\end{theorem}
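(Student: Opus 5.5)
The plan is to split $Q$ into the open set $\{\mu\neq0\}$, the interior $\Int_Q\set{\mu=0}$, and the remaining boundary region, and to conclude by continuity. The residual $R^{\rm d}=\dot c+\Div\bm j$ is continuous on $Q$ because $c,\bm j\in C^1$. It therefore suffices to show $R^{\rm d}=0$ on the open set $G:=\{\mu\neq0\}\cup\Int_Q\set{\mu=0}$ and then to check that $G$ is dense in $Q$.

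\emph{Step 1 (the open set $\{\mu\ne0\}$).} Hypothesis (i) gives $\mu R^{\rm d}=0$ pointwise, so $R^{\rm d}=0$ wherever $\mu\neq0$.

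\emph{Step 2 (the interior of $\{\mu=0\}$).} Fix a point of $\Int_Q\set{\mu=0}$ and choose an open cylinder $U\times I\Subset\Int_Q\set{\mu=0}$ around it.
\begin{itemize}
  \item \emph{Flux.} On $U\times I$ we have $\grad\mu=0$, so hypothesis (ii) gives $\bm j=\mathcal J(x,t,c,\zeta,0)=0$. Since $\bm j$ vanishes on an open set, $\Div\bm j=0$ there, and hence $R^{\rm d}=\dot c$ on $U\times I$.
  \item \emph{Time derivative of $c$.} By hypothesis (iii), $\mathcal M_U(c(t)|_U;\zeta_U)=\mu(t)|_U=0$ for every $t\in I$. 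The data $\zeta_U$ are time-independent and $c|_U\in C^1(I;\mathcal X_U)$. The chain rule in Banach spaces then gives
  \[
  0=\frac{\dd}{\dd t}\mathcal M_U(c(t)|_U;\zeta_U)=D_c\mathcal M_U(c(t)|_U;\zeta_U)[\dot c(t)|_U].
  \]
  \item \emph{Conclusion on the cylinder.} The identity above places $\dot c(t)|_U$ in $\Ker D_c\mathcal M_U$. It also lies in $T_{c(t)}\mathcal A_U$ by (iii). The isolation condition \eqref{eq:isolated_zero_state} therefore forces $\dot c(t)|_U=0$ as an element of $\mathcal X_U$. Hence $R^{\rm d}=0$ on $U\times I$.
\end{itemize}

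\emph{Step 3 (density and continuity).} For any set $A$ in a topological space, the set $\Int A\cup\Int(A^c)$ is dense: a point not in the closure of $\Int(A^c)$ has a neighbourhood contained in $A$. Take $A=\{\mu=0\}$; its complement $\{\mu\neq0\}$ is open. Then $G=\{\mu\neq0\}\cup\Int_Q\set{\mu=0}$ is dense in $Q$. The continuous function $R^{\rm d}$ vanishes on $G$ by Steps 1–2, hence it vanishes on all of $Q$, which is \eqref{eq:local_species_recovered}.

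\emph{Main obstacle.} The delicate point is in Step 2: passing from "$\dot c(t)|_U=0$ in $\mathcal X_U$" to the pointwise statement "$\dot c=0$ on $U\times I$". This requires the time derivative in $C^1(I;\mathcal X_U)$ to agree with the classical partial derivative of $c\in C^1(Q)$. I would handle this by assuming that $\mathcal X_U$ embeds continuously into a space of functions on $U$, for example $L^1_{\rm loc}(U)$, so that point evaluation almost everywhere is compatible with the embedding. Equality almost everywhere combined with continuity of $\dot c$ then gives pointwise equality. The boundary set $\partial\{\mu=0\}$, where neither (i) nor (iii) applies directly, is covered by the density argument in Step 3 and needs no separate constitutive assumption.
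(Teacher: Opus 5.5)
Your proof is correct and follows the paper's argument. You get $R^{\rm d}=0$ on $\set{\mu\neq0}$ from (i). On interior cylinders $U\times I$ you use zero flux from (ii), differentiate $\mathcal M_U(c(t)|_U;\zeta_U)=0$ in time, and invoke \eqref{eq:isolated_zero_state}. At the remaining points of $\partial\set{\mu=0}$ you use continuity of $R^{\rm d}$; the paper phrases this as approximation by points with $\mu\neq0$ rather than as your density statement, but it is the same step.

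One small correction: $\Int A\cup\Int(A^c)$ is not dense for arbitrary $A$ (take $A=\mathbb{Q}\subset\R$). Your justification is valid whenever $A$ is closed, which is exactly the case $A=\set{\mu=0}$ you use. Also, the paper takes for granted the identification of the $\mathcal X_U$-valued derivative with the classical $\dot c$ that you flag as the main obstacle. That identification is already built into the hypothesis $\dot c(t)|_U\in T_{c(t)}\mathcal A_U$, so no extra embedding assumption is needed.
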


\begin{proof}
At every point with $\mu\ne0$, \eqref{eq:weighted_species_equation} gives $R^{\rm d}=0$. If a point belongs to the boundary of $\set{\mu=0}$, it is approached by points with $\mu\ne0$; continuity of $R^{\rm d}$ therefore gives $R^{\rm d}=0$ there as well. It remains to consider an open cylinder $U\times I\Subset\Int_Q\set{\mu=0}$.

On this cylinder, $\mu=0$ and hence $\grad\mu=0$.  Assumption (ii) gives $\bm j=0$, so $\Div\bm j=0$ locally in distributions.  The identity
$\mathcal M_U(c(t)|_U;\zeta_U)=0$ can be differentiated in $\mathcal Z_U$ because $\zeta_U$ is fixed in time:
\[
  D_c\mathcal M_U(c(t)|_U;\zeta_U)
  [\dot c(t)|_U]=0.
\]
The injectivity condition \eqref{eq:isolated_zero_state} yields $\dot c=0$ on $U\times I$. Hence $R^{\rm d}=0$ throughout the interior of the zero set and therefore in all of $Q$.
\end{proof}

The proof uses two distinct mechanisms. Away from the zero-potential set, the weighted relation gives the residual directly. On a persistent zero-potential region, the conclusion instead follows from zero force, zero flux, and dynamical isolation of the corresponding equilibrium state. The latter condition is the substantive assumption: it excludes motion along a neutral family on which the chemical potential remains identically zero.

\begin{corollary}[Local energies with isolated critical states]\label{cor:local_energy_recovery}
Let $c,\mu\in C^1(Q)$ and $\bm j\in C^1(Q;\R^d)$. Assume \eqref{eq:weighted_species_equation}, a flux law $\bm j=\mathcal J(-\grad\mu)$ with $\mathcal J(0)=0$, and
\begin{equation}\label{eq:local_chemical_potential}
  \mu=f'(c)
\end{equation}
for a function $f\in C^1(\R)$ whose critical set
$\set{r\in\R:f'(r)=0}$ is discrete.  Then \eqref{eq:local_species_recovered} holds.
\end{corollary}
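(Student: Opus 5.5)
The plan is to reduce \cref{cor:local_energy_recovery} to \cref{thm:constitutive_recovery}. Hypotheses (i) and (ii) hold as stated: (i) is \eqref{eq:weighted_species_equation}, and the flux law $\bm j=\mathcal J(-\grad\mu)$ with $\mathcal J(0)=0$ is a special case of \eqref{eq:zero_force_zero_flux} with trivial constitutive data $\zeta$. So the only real work is hypothesis (iii) on an open cylinder $U\times I\Subset\Int_Q\set{\mu=0}$.

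Applying the abstract Banach-space formulation of (iii) literally would be awkward, because $D_c\mathcal M_U(c)[w]=f''(c)w$ needs $f\in C^2$, and we only have $f\in C^1$. I would therefore bypass that formal route and argue pointwise, reproducing the proof of \cref{thm:constitutive_recovery} rather than checking its hypotheses verbatim.

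Points with $\mu\neq0$, and points on the boundary of $\set{\mu=0}$, are handled exactly as in that proof, using $\mu R^{\rm d}=0$ and continuity of $R^{\rm d}$.

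On the cylinder, $\mu=0$ gives $\grad\mu=0$, hence $\bm j=\mathcal J(0)=0$ and $\Div\bm j=0$ there. It remains to show $\dot c=0$.

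\begin{itemize}
  \item Fix $x\in U$. For $t\in I$ we have $f'(c(x,t))=0$, so $c(x,t)$ lies in the critical set $Z:=\set{r:f'(r)=0}$.
  \item The map $t\mapsto c(x,t)$ is continuous on the interval $I$, so its image is connected.
  \item A connected subset of the discrete set $Z$ is a single point. Here discrete means every point of $Z$ is isolated, so a connected subset of $Z$ cannot contain two points without containing the interval between them.
  \item Hence $c(x,\cdot)$ is constant on $I$ and $\dot c(x,t)=0$.
\end{itemize}

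Therefore $R^{\rm d}=\dot c+\Div\bm j=0$ on the cylinder. Since $\Int_Q\set{\mu=0}$ is covered by such cylinders, $R^{\rm d}=0$ on all of $Q$, which is \eqref{eq:local_species_recovered}.

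The main obstacle is exactly this mismatch between the $C^1$ regularity of $f$ and the derivative-injectivity formulation \eqref{eq:isolated_zero_state}. The connectedness argument is how I would overcome it: it replaces "$\Ker f''(c)=\set{0}$" by "the zero set of $f'$ contains no continuum." If one prefers to cite the theorem formally, one could take $\mathcal X_U$ to be functions valued in $Z$ with $T\mathcal A_U=\set{0}$. However, the direct pointwise argument is cleaner and is the one I would write.
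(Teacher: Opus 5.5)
Your proposal is correct and is essentially the paper's proof. Away from $\Int_Q\set{\mu=0}$ you reuse the argument of \cref{thm:constitutive_recovery}; inside it, zero force gives $\bm j=\mathcal J(0)=0$, while continuity of $c$ and discreteness of the critical set force $c$ to be constant, so hypothesis (iii) is bypassed exactly as the paper does (you apply connectedness to $t\mapsto c(x,t)$ on time segments, the paper to connected space--time components of the zero set, which is the same idea). The aside about citing the theorem formally with $Z$-valued functions as $\mathcal X_U$ is unnecessary and not well-posed, since such functions do not form a Banach space, so simply omit it.
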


\begin{proof}
On each connected component of an open zero-potential region, continuity of $c$ and discreteness of the critical set force $c$ to take one constant critical value. Thus $\dot c=0$.  Equation \eqref{eq:zero_force_zero_flux} gives $\bm j=0$, and the local residual vanishes.  Outside such a region the proof of \cref{thm:constitutive_recovery} applies.
\end{proof}

For a gradient energy
\begin{equation}\label{eq:gradient_energy_zero_state}
  \mu=f'(c)-\kappa\Delta c,
\end{equation}
condition \eqref{eq:isolated_zero_state} is the absence of an admissible kernel of the linearized operator
$f''(c)-\kappa\Delta$ at a zero-potential state.  It holds at isolated coercive equilibria after the boundary and mass constraints are included.  Translation-invariant interfaces can carry neutral modes, in which case the weighted channel alone does not determine motion along the equilibrium family; actual Cahn--Hilliard front motion is governed by the balance-constrained transport dynamics \citep{Pego1989}.  Time-dependent external fields require the corresponding term $D_\zeta\mathcal M_U[\dot\zeta]$ in the differentiated equilibrium relation and are therefore outside the stationary-parameter hypothesis of \cref{thm:constitutive_recovery}.

\subsection{Weak mixed completion}

The preceding recovery theorem concerns regular fields and uses the constitutive structure of persistent zero-potential states.  At finite-energy regularity, local conservation is formulated directly in a dual space.  Use the Gelfand triple
\begin{equation}\label{eq:gelfand_triple}
  V=H^1(\Omega)\hookrightarrow H=L^2(\Omega)
  \cong H^*\hookrightarrow V^*,
  \qquad
  Y:=L^2(\Omega;\R^d).
\end{equation}
Let $G:V\to Y$ be $G\eta=\grad\eta$ and define the closed-port weak divergence
\begin{equation}\label{eq:B_operator}
  B:=-G^*:Y\to V^*,
  \qquad
  \dual{B\bm j}{\eta}_{V^*,V}
  :=-\int_\Omega\bm j\cdot\grad\eta\dd x.
\end{equation}
Then
\begin{equation}\label{eq:operator_power_identity}
  \dual{B\bm j}{\mu}_{V^*,V}
  +(\bm j,G\mu)_Y=0.
\end{equation}
For $v\in V^*$, define
\begin{equation}\label{eq:weak_residual}
  R:=v+B\bm j\in V^*.
\end{equation}

Let $\Psi:Y\to[0,+\infty]$ be a proper, convex, lower-semicontinuous flux power functional with $\Psi(0)=0$, and let $\mu\in V$ be the energetic effort. Define
\begin{equation}\label{eq:mixed_functional}
  \Lcal_\mu(v,\bm j,\lambda)
  :=\Psi(\bm j)+\dual{v}{\mu}_{V^*,V}
  -\dual{v+B\bm j}{\lambda}_{V^*,V},
  \qquad \lambda\in V.
\end{equation}
A triple $(v,\bm j,\lambda)$ is stationary when
\begin{subequations}\label{eq:mixed_stationarity}
\begin{align}
  \dual{w}{\mu-\lambda}_{V^*,V}&=0
  &&\text{for every }w\in V^*,\label{eq:mixed_stationarity_v}\\
  \dual{v+B\bm j}{\eta}_{V^*,V}&=0
  &&\text{for every }\eta\in V,\label{eq:mixed_stationarity_lambda}\\
  \Psi(\bm k)-\Psi(\bm j)
  +(\bm k-\bm j,G\lambda)_Y&\ge0
  &&\text{for every }\bm k\in Y.\label{eq:mixed_stationarity_j}
\end{align}
\end{subequations}

\begin{theorem}[Mixed stationarity]\label{thm:mixed_stationarity}
Conditions \eqref{eq:mixed_stationarity} are equivalent to
\begin{subequations}\label{eq:mixed_system}
\begin{align}
  \lambda&=\mu &&\text{in }V,\label{eq:mixed_lambda}\\
  v+B\bm j&=0 &&\text{in }V^*,\label{eq:mixed_balance}\\
  -G\mu&\in\partial\Psi(\bm j) &&\text{in }Y.\label{eq:mixed_flux}
\end{align}
\end{subequations}
\end{theorem}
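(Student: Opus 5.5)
The plan is to prove the three equivalences line by line, since each stationarity condition in \eqref{eq:mixed_stationarity} involves essentially one unknown relation.

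\emph{First line.} Condition \eqref{eq:mixed_stationarity_v} says that $\mu-\lambda\in V$ is annihilated by every $w\in V^*$. By the Hahn--Banach theorem (or simply because $V=H^1(\Omega)$ is a Hilbert space and the Riesz map is onto $V^*$), the canonical pairing separates points of $V$. Hence $\mu-\lambda=0$, which is \eqref{eq:mixed_lambda}. The converse is trivial.

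\emph{Second line.} Condition \eqref{eq:mixed_stationarity_lambda} states that the functional $v+B\bm j\in V^*$ vanishes on all of $V$. This is precisely the statement $v+B\bm j=0$ in $V^*$, i.e.\ \eqref{eq:mixed_balance}. No closed-port argument is needed here, because $B$ was defined in \eqref{eq:B_operator} as $-G^*$ with the natural (Neumann-type) sign convention built in.

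\emph{Third line.} This is the only step with content. The function $\bm j\mapsto-\dual{B\bm j}{\lambda}$ is linear and continuous on $Y$. By \eqref{eq:B_operator} it equals $(\bm j,G\lambda)_Y$, so the $\bm j$-dependent part of $\Lcal_\mu$ is $\Psi(\bm j)+(\bm j,G\lambda)_Y$. Condition \eqref{eq:mixed_stationarity_j} is the variational inequality
\[
\Psi(\bm k)\ge\Psi(\bm j)+(-G\lambda,\bm k-\bm j)_Y\quad\text{for every }\bm k\in Y,
\]
and this is the definition of $-G\lambda\in\partial\Psi(\bm j)$. The Riesz identification $Y^*\cong Y$ makes the subdifferential a subset of $Y$. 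One point must be handled: if $\Psi(\bm j)=+\infty$ the inequality is ill-posed. I will adopt the convention that stationarity presupposes $\bm j\in\operatorname{dom}\Psi$, which is also implicit in $\partial\Psi(\bm j)\neq\emptyset$. Using $\lambda=\mu$ from the first step then converts this into \eqref{eq:mixed_flux}.

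\emph{Converse.} Conversely, given \eqref{eq:mixed_system}, substituting $\lambda=\mu$ into the subgradient inequality recovers \eqref{eq:mixed_stationarity_j}. The other two conditions follow immediately.

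\emph{Main obstacle.} I expect the only delicate point to be the coupling in the third line: \eqref{eq:mixed_stationarity_j} is stated with $\lambda$, whereas \eqref{eq:mixed_flux} is stated with $\mu$. The equivalence therefore holds for the system as a whole, not condition by condition. I will make this explicit by proving $(\ref{eq:mixed_stationarity_v})\Leftrightarrow(\ref{eq:mixed_lambda})$ first and then using it in both directions of the third equivalence.
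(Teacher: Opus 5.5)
Your proposal is correct and follows essentially the paper's own argument. The first two conditions give $\lambda=\mu$ and $v+B\bm j=0$ because the test spaces separate points, the third is exactly the subgradient inequality $-G\lambda\in\partial\Psi(\bm j)$ into which $\lambda=\mu$ is substituted, every step reverses, and your remark that stationarity presupposes $\Psi(\bm j)<+\infty$ is a harmless clarification the paper leaves implicit.
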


\begin{proof}
The first two stationarity conditions give \eqref{eq:mixed_lambda} and \eqref{eq:mixed_balance} because the corresponding test spaces separate points.  The third condition is the subgradient inequality
$-G\lambda\in\partial\Psi(\bm j)$; use \eqref{eq:mixed_lambda}.  Every step is reversible.
\end{proof}

The three variations play different roles. The rate direction identifies the multiplier with the energetic effort, the multiplier direction imposes conservation against the full test space $V$, and the flux inequality supplies the constitutive contact. Keeping these directions separate is what yields an unweighted balance in $V^*$.

Testing \eqref{eq:mixed_balance} by $1\in V$ gives
\begin{equation}\label{eq:mass_conservation_weak}
  \dual{v}{1}_{V^*,V}=0,
\end{equation}
since $G1=0$.  Thus $v=\dot c$ conserves total mass.  In the regular model class of \cref{thm:constitutive_recovery}, the weighted selector already yields the same local balance.  The multiplier supplies its weak completion without requiring continuity of the residual or isolation of zero-potential states.

\section{Energy balance and process power}\label{sec:energy}

Let
\begin{equation}\label{eq:energy_trajectory_regularity}
  c\in W^{1,1}(0,T;V^*),
  \qquad
  \mu\in L^1(0,T;V),
\end{equation}
and let $\Ecal$ be a free energy such that $\Ecal(c(\cdot))$ is absolutely continuous and
\begin{equation}\label{eq:energy_chain_rule}
  \frac{\dd}{\dd t}\Ecal(c(t))
  =\dual{\dot c(t)}{\mu(t)}_{V^*,V}
  \qquad\text{for a.e. }t.
\end{equation}
This chain rule is model-dependent for nonconvex weak evolutions and is stated explicitly rather than inferred from formal differentiation \citep{Brezis2011,Mielke2023}.

Let $\Psi_t:Y\to[0,+\infty]$ be a measurable family of proper, convex, lower-semicontinuous flux power functions with $\Psi_t(0)=0$, and set
\begin{equation}\label{eq:force_and_process_power}
  \bm e(t):=-G\mu(t),
  \qquad
  D(t):=\Psi_t(\bm j(t))+\Psi_t^*(\bm e(t)).
\end{equation}
Assume that all terms below are integrable.

\begin{proposition}[Energy--power equivalence]\label{prop:energy_power_equivalence}
Suppose
\begin{equation}\label{eq:trajectory_balance}
  \dot c+B\bm j=0
  \qquad\text{in }V^*\text{ for a.e. }t.
\end{equation}
Then the constitutive contact
\begin{equation}\label{eq:constitutive_contact}
  \bm e(t)\in\partial\Psi_t(\bm j(t))
\end{equation}
is equivalent to
\begin{equation}\label{eq:energy_power_balance}
  \Ecal(c(t))+
  \int_s^tD(\tau)\dd\tau
  =\Ecal(c(s))
  \qquad(0\le s\le t\le T).
\end{equation}
Equivalently, under the same balance and chain-rule hypotheses, it suffices to assume the inequality in \eqref{eq:energy_power_balance} for $s=0$ and $t=T$; the Fenchel gap then vanishes almost everywhere, and equality follows on every subinterval.
\end{proposition}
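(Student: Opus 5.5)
The proof rests on a single pointwise identity that turns the energy derivative into a pairing of flux and force. Fix a time $t$ at which the balance \eqref{eq:trajectory_balance} and the chain rule \eqref{eq:energy_chain_rule} both hold. Test the balance with $\mu(t)\in V$ and use the operator power identity \eqref{eq:operator_power_identity}:
\[
  \frac{\dd}{\dd t}\Ecal(c(t))
  =\dual{\dot c}{\mu}_{V^*,V}
  =-\dual{B\bm j}{\mu}_{V^*,V}
  =(\bm j,G\mu)_Y
  =-(\bm j,\bm e)_Y .
\]
With the Fenchel gap
\[
  g(t):=\Psi_t(\bm j(t))+\Psi_t^*(\bm e(t))-(\bm e(t),\bm j(t))_Y ,
\]
this reads
\[
  \frac{\dd}{\dd t}\Ecal(c(t))+D(t)=g(t)
  \qquad\text{for a.e. }t .
\]
The Fenchel--Young inequality gives $g\ge0$ a.e. It also gives the pointwise equivalence
\[
  g(t)=0\iff\bm e(t)\in\partial\Psi_t(\bm j(t)),
\]
which is the standard relation already recorded in \eqref{eq:fenchel_contact}. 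The hypothesis that all terms are integrable makes $g\in L^1(0,T)$. Absolute continuity of $\Ecal(c(\cdot))$ then lets me integrate the displayed relation over any $[s,t]$.

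\emph{Forward direction.} Assume the contact \eqref{eq:constitutive_contact} holds a.e. Then $g=0$ a.e. Integrating the identity over $[s,t]$ gives \eqref{eq:energy_power_balance} on every subinterval.

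\emph{Converse direction.} Assume \eqref{eq:energy_power_balance} holds for every $0\le s\le t\le T$. Then $\int_s^t g\dd\tau=0$ on all subintervals. Since $g\ge0$, this forces $g=0$ a.e., and therefore the contact holds a.e.

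\emph{Weakened version.} Assume only the one-sided inequality
\[
  \Ecal(c(T))+\int_0^T D\dd\tau\le\Ecal(c(0)).
\]
Integrating the identity over $[0,T]$ turns this into $\int_0^T g\dd\tau\le0$. Combined with $g\ge0$, this gives $g=0$ a.e. Contact then holds a.e., and equality follows on every subinterval by the forward direction.

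The main obstacle is justifying the first display. Two points need care. First, $\mu(t)$ must be an admissible test function in the balance for a.e. $t$; this is fine because $\mu\in L^1(0,T;V)$ and the balance holds in $V^*$ with the full test space $V$. Second, the pairing $(\bm e,\bm j)_Y$ must be integrable in time; this is exactly what the standing integrability assumption provides, and it is also what allows $g$ to be integrated. A further subtlety is the measurability of $t\mapsto\Psi_t^*(\bm e(t))$. I treat it as part of the standing hypotheses rather than proving it.
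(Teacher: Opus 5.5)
Your proposal is correct and follows the paper's own argument: the chain rule, the balance, and the operator power identity \eqref{eq:operator_power_identity} give $\frac{\dd}{\dd t}\Ecal(c(t))=-(\bm e,\bm j)_Y$, so $\frac{\dd}{\dd t}\Ecal(c(t))+D$ equals the Fenchel gap. That gap is nonnegative and vanishes exactly under the contact \eqref{eq:constitutive_contact}, and integrating gives both directions. Your explicit handling of the one-sided endpoint version ($\int_0^T g\dd\tau\le0$ together with $g\ge0$) spells out what the paper compresses into ``integration proves both directions.''
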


\begin{proof}
By \eqref{eq:energy_chain_rule}, \eqref{eq:trajectory_balance}, and \eqref{eq:operator_power_identity},
\begin{equation}\label{eq:energy_rate_pairing}
  \frac{\dd}{\dd t}\Ecal(c(t))
  =-\dual{B\bm j}{\mu}
  =(\bm j,G\mu)_Y
  =-(\bm e,\bm j)_Y.
\end{equation}
Fenchel--Young gives the nonnegative gap
\begin{equation}\label{eq:fenchel_gap}
  \Psi_t(\bm j)+\Psi_t^*(\bm e)-(\bm e,\bm j)_Y\ge0,
\end{equation}
with equality exactly under \eqref{eq:constitutive_contact}.  Integration proves both directions.  This is the standard energy--power argument for generalized gradient systems \citep{MielkePeletierRenger2014,Mielke2023}.
\end{proof}

For an open normal port, assume $\bm j\in H(\Div;\Omega)$, let $g=\gamma_n\bm j$, and write the balance as $\dot c+\Div\bm j=0$.  The Green identity is
\begin{equation}\label{eq:open_operator_identity}
  \int_\Omega\mu\Div\bm j\dd x
  +\int_\Omega\bm j\cdot\grad\mu\dd x
  =\dual{g}{\gamma\mu}_{H^{-1/2},H^{1/2}}.
\end{equation}
Consequently,
\begin{equation}\label{eq:open_energy_balance}
  \Ecal(c(t))+
  \int_s^tD(\tau)\dd\tau
  +\int_s^t\dual{g(\tau)}{\gamma\mu(\tau)}\dd\tau
  =\Ecal(c(s)).
\end{equation}
With outward normal orientation, the last term is chemical power leaving the domain.  The operator $B$ in \eqref{eq:B_operator} is reserved for the closed-port realization.

Along a closed trajectory satisfying \eqref{eq:constitutive_contact}, Fenchel equality yields
\begin{equation}\label{eq:process_power_pairing}
  D(t)=(\bm e(t),\bm j(t))_Y
  =-\int_\Omega\bm j\cdot\grad\mu\dd x\ge0.
\end{equation}
Combining \eqref{eq:energy_rate_pairing} with the spending rule gives
\begin{equation}\label{eq:trajectory_spending_identity}
  D(t)\delta t_s-D(t)\delta t_c
  =D(t)\delta t_s
   +\dual{\dot c(t)\delta t_c}{\mu(t)}_{V^*,V}.
\end{equation}
If the local power density
\begin{equation}\label{eq:local_diffusion_power_density}
  p(x,t):=-\bm j(x,t)\cdot\grad\mu(x,t)
\end{equation}
belongs to $C(I;L^1(\Omega))$, \cref{thm:localized_spending} gives
\begin{equation}\label{eq:localized_trajectory_spending}
  \left.\frac{\dd}{\dd\eps}\right|_{\eps=0}
  \Gamma_p[t+\eps\delta t_s,t+\eps\delta t_c]
  =\int_\Omega p(x,t)(\delta t_s-\delta t_c)\dd x.
\end{equation}
The global identity \eqref{eq:trajectory_spending_identity} and the local test identity \eqref{eq:localized_trajectory_spending} are complementary statements: the first uses the closed-port duality pairing, while the second resolves the power density into independent local selector fields.

\section{A synchronized thermo-diffusion functional}\label{sec:single}

This section realizes the preceding structures within one scalar functional on a product space.  Storage variables, balance variables, and stored histories remain independent during differentiation and are synchronized only afterward.  The resulting stationarity is directional stationarity on a declared subspace, not unrestricted criticality with respect to every ambient copy.

\subsection{Storage, mixed, and history blocks}

Let $\Omega$ be a bounded $C^1$ domain and use closed normal ports in the derivation. Introduce an energetic concentration $c_E$, a kinematic concentration $c_K$, an energetic temperature $\theta_E$, and a mixed-block temperature $\theta_M$. Define
\begin{equation}\label{eq:single_free_energy}
  \mathcal F(c_E,\theta_E)
  :=\int_\Omega\psi(c_E,\grad c_E,\theta_E)\dd x.
\end{equation}
For smooth fields, write
\begin{equation}\label{eq:single_energy_effort}
  D_c\mathcal F(c_E,\theta_E)[\eta]
  =\int_\Omega\mu_{\mathcal F}\eta\dd x,
  \qquad
  \mu_{\mathcal F}
  =\partial_c\psi-\Div\partial_{\grad c}\psi,
\end{equation}
with the corresponding natural boundary condition.

Let $\mu_E$ be the effort in the energetic--kinematic bridge, let $\mu_M$ be the effort in the mixed block, and let $(\bar\mu,\bar{\bm j},\bar{\bm q})$ denote fixed stored-history copies. Let $s$ be entropy density, $v$ the concentration rate, and $\lambda$ an independent balance multiplier. For convex $C^1$ flux power densities $r_{\rm d}(x,\cdot)$ and $r_{\rm q}(x,\cdot)$, define
\begin{equation}\label{eq:single_mixed_block}
\begin{aligned}
  \mathfrak M_{\mu_M,\theta_M}(v,\bm j,\bm q,\lambda)
  :=\int_\Omega\bigl[
    &r_{\rm d}(x,\bm j)+r_{\rm q}(x,\bm q)
     +\mu_M v\\
    &-\lambda(v+\Div\bm j)
     +\bm q\cdot\grad\theta_M
  \bigr]\dd x.
\end{aligned}
\end{equation}
The bridge is
\begin{equation}\label{eq:single_bridge}
  \mathfrak B(c_E,c_K;\mu_E)
  :=\int_\Omega\mu_E(c_K-c_E)\dd x.
\end{equation}
It vanishes after synchronization but has nonzero derivatives in the two independent concentration directions.

For endpoint fields $\tau_c,\tau_s$, define the stored histories
\begin{equation}\label{eq:single_heat_history}
  \mathscr H_{\bar q}[\tau_s]
  :=\int_\Omega\int_{t_0}^{\tau_s(x)}
     \Div\bar{\bm q}(x,\sigma)\dd\sigma\dd x
\end{equation}
and
\begin{equation}\label{eq:single_diffusion_history}
\begin{aligned}
  \mathscr H_{\bar\mu\bar j}[\tau_c,\tau_s]
  :=&\int_\Omega\int_{t_0}^{\tau_c(x)}
       \bar\mu\Div\bar{\bm j}\dd\sigma\dd x\\
    &+\int_\Omega\int_{t_0}^{\tau_s(x)}
       \bar{\bm j}\cdot\grad\bar\mu\dd\sigma\dd x.
\end{aligned}
\end{equation}
At synchronized endpoints and history copies, \eqref{eq:single_diffusion_history} reconstructs the accumulated history of $\Div(\mu\bm j)$.

For a fixed positive scaling parameter $\tau_*$, set
\begin{equation}\label{eq:single_functional}
\boxed{
\begin{aligned}
  \widehat\Pi_t
  :=&\;\mathcal F(c_E,\theta_E)
    +\int_\Omega\theta_Es\dd x
    +\mathfrak B(c_E,c_K;\mu_E)\\
   &+\tau_*\mathfrak M_{\mu_M,\theta_M}(v,\bm j,\bm q,\lambda)
    +\mathscr H_{\bar q}[\tau_s]
    +\mathscr H_{\bar\mu\bar j}[\tau_c,\tau_s].
\end{aligned}
}
\end{equation}
The physical synchronization set at time $t$ is
\begin{equation}\label{eq:single_sync_set}
\begin{gathered}
  c_E=c_K=c,
  \qquad
  \theta_E=\theta_M=\theta,
  \qquad
  \mu_E=\mu_M=\bar\mu=\mu,\\
  \bar{\bm j}=\bm j,
  \qquad
  \bar{\bm q}=\bm q,
  \qquad
  v=\dot c,
  \qquad
  \tau_c=\tau_s=t.
\end{gathered}
\end{equation}
Denote this set by $\mathscr S_t$ and set $\Pi_t=\widehat\Pi_t|_{\mathscr S_t}$ after differentiation.

\subsection{Admissible directions and first variation}

At a point of $\mathscr S_t$, take
\begin{equation}\label{eq:single_direction_tuple}
  \bm h=(\eta,\vartheta,w,\bm k_{\rm d},\bm k_{\rm q},\ell,
  \delta t_c,\delta t_s),
\end{equation}
where the scalar directions belong to $C_c^\infty(\Omega)$ and the flux directions to $C_c^\infty(\Omega;\R^d)$ in the bulk calculation. The nonzero ambient components are
\begin{equation}\label{eq:single_direction_components}
\begin{gathered}
  \delta c_E=\eta,
  \qquad
  \delta c_K=v\,\delta t_c,
  \qquad
  \delta\theta_E=\vartheta,
  \qquad
  \delta s=\dot s\,\delta t_s,\\
  \delta v=w,
  \qquad
  \delta\bm j=\bm k_{\rm d},
  \qquad
  \delta\bm q=\bm k_{\rm q},
  \qquad
  \delta\lambda=\ell,\\
  \delta\tau_c=\delta t_c,
  \qquad
  \delta\tau_s=\delta t_s.
\end{gathered}
\end{equation}
The copies $\mu_E,\mu_M,\theta_M$ and the stored histories $(\bar\mu,\bar{\bm j},\bar{\bm q})$ remain fixed in this direction. Define
\begin{equation}\label{eq:single_directional_derivative}
  \delta\Pi_t[\bm h]
  :=D\widehat\Pi_t\big|_{\mathscr S_t}[\bm h].
\end{equation}
This is the ordinary directional derivative of one scalar functional, restricted to the declared admissible direction space.

The vertical state variation $\eta$ and the trajectory-generated direction $v\,\delta t_c$ are independent before synchronization.  The latter is one selected history direction and does not represent every concentration variation; no division by $v$ is used.  Likewise, $\delta t_c$ and $\delta t_s$ are independent local selector fields.  This separation is the reason for the energetic and kinematic concentration copies.

\begin{theorem}[First variation of the synchronized functional]\label{thm:single_first_variation}
Assume that $\psi$ is continuously differentiable in its displayed variables, that $r_{\rm d}$ and $r_{\rm q}$ are continuously differentiable in their flux variables, and that the synchronized fields, histories, and directions have the regularity required by the integrations and terminal derivatives above.  In the closed-port calculation, the normal components of the flux variations vanish. Then
\begin{equation}\label{eq:single_first_variation}
\boxed{
\begin{aligned}
  \delta\Pi_t[\bm h]
  =&\int_\Omega\Bigl[
      (\mu_{\mathcal F}-\mu)\eta
      +(\partial_\theta\psi+s)\vartheta
      +\mu(v+\Div\bm j)\delta t_c\\
   &\hspace{27mm}
      +(\theta\dot s+\Div\bm q
        +\bm j\cdot\grad\mu)\delta t_s
    \Bigr]\dd x\\
   &+\tau_*\int_\Omega\Bigl[
      (\mu-\lambda)w
      +(\partial_{\bm j}r_{\rm d}+\grad\lambda)\cdot\bm k_{\rm d}\\
   &\hspace{25mm}
      +(\partial_{\bm q}r_{\rm q}+\grad\theta)\cdot\bm k_{\rm q}
      -(v+\Div\bm j)\ell
    \Bigr]\dd x.
\end{aligned}
}
\end{equation}
\end{theorem}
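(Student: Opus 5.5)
The plan is to differentiate $\widehat\Pi_t$ block by block along the one-parameter family $\eps\mapsto(\text{synchronized point})+\eps\,\bm h$ of \eqref{eq:single_direction_components}. The derivative is then evaluated on $\mathscr S_t$ using the identifications \eqref{eq:single_sync_set}. Since $\widehat\Pi_t$ is a sum of six terms and $\bm h$ moves each ambient variable independently, the derivative is the sum of six partial derivatives. Each depends only on the components of $\bm h$ that enter that block. I will compute these separately and then collect coefficients of $\eta,\vartheta,w,\bm k_{\rm d},\bm k_{\rm q},\ell,\delta t_c,\delta t_s$.

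\textbf{Storage and bridge.} For the storage block, \eqref{eq:single_energy_effort} gives $\int_\Omega\mu_{\mathcal F}\eta\dd x$, using $\eta\in C_c^\infty(\Omega)$ so that no boundary term arises. The $\theta_E$-derivative gives $\int_\Omega\partial_\theta\psi\,\vartheta\dd x$. The term $\int_\Omega\theta_E s\dd x$ contributes $\int_\Omega(s\vartheta+\theta\dot s\,\delta t_s)\dd x$, since $\delta s=\dot s\,\delta t_s$. The bridge is linear in $(c_E,c_K)$ with $\mu_E$ held fixed, so it contributes $\int_\Omega\mu(v\,\delta t_c-\eta)\dd x$.

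\textbf{Mixed block.} The mixed block is differentiated in $(v,\bm j,\bm q,\lambda)$ with $\mu_M,\theta_M$ fixed. This yields
\[
\int_\Omega\bigl[\partial_{\bm j}r_{\rm d}\cdot\bm k_{\rm d}+\partial_{\bm q}r_{\rm q}\cdot\bm k_{\rm q}+(\mu-\lambda)w-\lambda\Div\bm k_{\rm d}-(v+\Div\bm j)\ell+\bm k_{\rm q}\cdot\grad\theta\bigr]\dd x,
\]
which is then multiplied by $\tau_*$. Because $\bm k_{\rm d}$ has vanishing normal component (indeed compact support), integration by parts gives $-\int_\Omega\lambda\Div\bm k_{\rm d}\dd x=\int_\Omega\grad\lambda\cdot\bm k_{\rm d}\dd x$. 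Differentiation under the integral is justified by the $C^1$ hypotheses on $r_{\rm d},r_{\rm q}$ and the assumed regularity.

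\textbf{History blocks.} The histories are frozen, and only the endpoints move. For the heat history, \cref{thm:localized_spending} (one endpoint, $p=\Div\bar{\bm q}\in C(I;L^1)$) gives $\int_\Omega\Div\bm q\,\delta t_s\dd x$ at synchronization. For the diffusion history, \cref{thm:split_port_rule} gives $\int_\Omega(\mu\Div\bm j\,\delta t_c+\bm j\cdot\grad\mu\,\delta t_s)\dd x$.

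\textbf{Collecting terms.} The $\delta t_c$ coefficient is $\mu v+\mu\Div\bm j=\mu(v+\Div\bm j)$. The $\delta t_s$ coefficient is $\theta\dot s+\Div\bm q+\bm j\cdot\grad\mu$. The $\eta$ coefficient is $\mu_{\mathcal F}-\mu$, and the $\vartheta$ coefficient is $\partial_\theta\psi+s$. This reproduces \eqref{eq:single_first_variation}.

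The main obstacle is bookkeeping rather than analysis. The bridge must be handled so that the $\eta$-contribution $-\mu\eta$ and the trajectory direction $\mu v\,\delta t_c$ remain separate. The history terms must be treated only through endpoint motion, with the stored copies held fixed, so that no $\bm k_{\rm d}$-dependence leaks in from $\mathscr H_{\bar\mu\bar j}$. I also need to check that the regularity hypotheses put $\bar\mu\Div\bar{\bm j}$, $\bar{\bm j}\cdot\grad\bar\mu$ and $\Div\bar{\bm q}$ in $C(I;L^1(\Omega))$, so that the earlier terminal theorems apply verbatim.
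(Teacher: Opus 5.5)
Your proposal is correct and matches the paper's proof essentially step for step. The shared steps are block-by-block differentiation with the copies $\mu_E,\mu_M,\theta_M$ and the stored histories held fixed, the bridge contribution $\int_\Omega\mu(v\,\delta t_c-\eta)\dd x$, the closed-port integration by parts turning $-\int_\Omega\lambda\Div\bm k_{\rm d}\dd x$ into $\int_\Omega\grad\lambda\cdot\bm k_{\rm d}\dd x$, and the terminal derivatives of the two history blocks from the localized spending and split-port rules. Your use of \cref{thm:localized_spending} for the single-endpoint heat history is slightly more precise than the paper's citation of \cref{thm:split_port_rule} for both histories.
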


\begin{proof}
The storage and bridge blocks contribute
\[
  \int_\Omega\left[
    \mu_{\mathcal F}\eta
    +(\partial_\theta\psi+s)\vartheta
    +\theta\dot s\,\delta t_s
    +\mu(v\delta t_c-\eta)
  \right]\dd x.
\]
The mixed block gives the second integral in \eqref{eq:single_first_variation}; the closed-port integration by parts changes
$-\int_\Omega\lambda\Div\bm k_{\rm d}\dd x$
into $\int_\Omega\grad\lambda\cdot\bm k_{\rm d}\dd x$.  The terminal derivatives of \eqref{eq:single_heat_history} and \eqref{eq:single_diffusion_history} are
\[
  \int_\Omega\Div\bm q\,\delta t_s\dd x,
  \qquad
  \int_\Omega
  \left(\mu\Div\bm j\,\delta t_c
       +\bm j\cdot\grad\mu\,\delta t_s\right)\dd x,
\]
by \cref{thm:split_port_rule}. Adding the contributions proves the formula.
\end{proof}

Formula \eqref{eq:single_first_variation} displays the contribution of each block before the copies are identified. The storage and bridge terms generate conjugacy and the trajectory-based species power, the mixed block provides the independently tested balance and flux laws, and the locked histories supply the two terminal channels. Synchronization is therefore imposed after differentiation, not used to insert the field equations into the functional.

\begin{theorem}[Generated thermo-diffusion structure]\label{thm:single_generation}
The directional stationarity condition
\begin{equation}\label{eq:single_stationarity}
  \delta\Pi_t[\bm h]=0
  \qquad\text{for every admissible }\bm h
\end{equation}
is equivalent to
\begin{subequations}\label{eq:single_generated_system}
\begin{align}
  \mu&=\frac{\delta\mathcal F}{\delta c}
  =\partial_c\psi-\Div\partial_{\grad c}\psi,
  &&\text{energetic conjugacy},\label{eq:single_mu}\\
  s&=-\partial_\theta\psi,
  &&\text{thermal conjugacy},\label{eq:single_s}\\
  v+\Div\bm j&=0,
  &&\text{species balance},\label{eq:single_balance}\\
  \lambda&=\mu,
  &&\text{balance effort},\label{eq:single_lambda}\\
  \partial_{\bm j}r_{\rm d}(x,\bm j)+\grad\mu&=0,
  &&\text{diffusion law},\label{eq:single_diffusion_law}\\
  \partial_{\bm q}r_{\rm q}(x,\bm q)+\grad\theta&=0,
  &&\text{heat-flux law},\label{eq:single_heat_law}\\
  \theta\dot s+\Div\bm q+\bm j\cdot\grad\mu&=0,
  &&\text{entropy channel}.\label{eq:single_entropy}
\end{align}
\end{subequations}
The coefficient $\mu(v+\Div\bm j)$ then vanishes by the independently generated balance; it is not divided by $\mu$.
\end{theorem}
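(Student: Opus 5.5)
The plan is to start from the explicit first-variation formula of \cref{thm:single_first_variation} and use the independence of the eight components of the admissible direction $\bm h$ in \eqref{eq:single_direction_tuple}. Since $\delta\Pi_t[\bm h]$ is linear in $\bm h$ and each component enters \eqref{eq:single_first_variation} through exactly one coefficient, stationarity for every admissible $\bm h$ is equivalent to stationarity for directions in which only one component is nonzero. For each component I will therefore set all others to zero and apply the localization statement of \cref{sec:spending}: if $\int_\Omega A\varphi\dd x=0$ for all $\varphi\in C_c^\infty(\Omega)$, then $A=0$ in $\Dcal'(\Omega)$. The vector version with $C_c^\infty(\Omega;\R^d)$ test fields handles $\bm k_{\rm d},\bm k_{\rm q}$. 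Because $\tau_*>0$, it can be divided out of the mixed-block coefficients.

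This yields eight coefficient equations, one per direction:
\begin{itemize}
\item $\eta$ gives $\mu_{\mathcal F}=\mu$, which is \eqref{eq:single_mu};
\item $\vartheta$ gives \eqref{eq:single_s};
\item $w$ gives $\lambda=\mu$, which is \eqref{eq:single_lambda};
\item $\ell$ gives $v+\Div\bm j=0$, which is \eqref{eq:single_balance};
\item $\bm k_{\rm d}$ gives $\partial_{\bm j}r_{\rm d}+\grad\lambda=0$, which becomes \eqref{eq:single_diffusion_law} after substituting $\lambda=\mu$;
\item $\bm k_{\rm q}$ gives \eqref{eq:single_heat_law};
\item $\delta t_s$ gives the entropy channel \eqref{eq:single_entropy};
\item $\delta t_c$ gives $\mu(v+\Div\bm j)=0$.
\end{itemize}
The last equation is redundant: it follows from the $\ell$-equation by multiplication, so no division by $\mu$ and no appeal to \cref{thm:constitutive_recovery} is needed. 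Under the regularity assumed in \cref{thm:single_first_variation}, all coefficients are continuous or at least locally integrable, so the distributional identities hold pointwise or a.e.

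For the converse, I assume the system \eqref{eq:single_generated_system}. Substituting $\lambda=\mu$ shows that every coefficient in \eqref{eq:single_first_variation} vanishes, including $\mu(v+\Div\bm j)$ by \eqref{eq:single_balance}. Hence $\delta\Pi_t[\bm h]=0$ for every admissible $\bm h$.

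The only genuinely delicate point is the order of substitution. The diffusion coefficient carries $\grad\lambda$ rather than $\grad\mu$, so the identification $\lambda=\mu$ must be obtained from the $w$-direction first. In addition, the $\delta t_c$-coefficient must be recognized as a consequence of the balance rather than an independent constraint; this is exactly the remark closing the theorem statement. I also need the localization to be applied to smooth coefficients, or to ensure that $\mu_{\mathcal F}-\mu$ is meaningful as a distribution. This is covered by the assumed regularity of $\psi$ and of the synchronized fields.
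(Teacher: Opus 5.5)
Your proposal is correct and follows the paper's proof. Both read off each coefficient of the first-variation formula using the independence of $\eta,\vartheta,w,\bm k_{\rm d},\bm k_{\rm q},\ell,\delta t_s$, with $\lambda=\mu$ from the $w$-direction turning $\grad\lambda$ into $\grad\mu$ in the diffusion law. Both treat the $\delta t_c$ coefficient $\mu(v+\Div\bm j)$ as a consequence of the $\ell$-generated balance rather than a separate constraint, and obtain the converse by substituting the system back into the formula.
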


\begin{proof}
Independence of $\eta,\vartheta,w,\bm k_{\rm d},\bm k_{\rm q},\ell$, and $\delta t_s$ in \eqref{eq:single_first_variation} gives \eqref{eq:single_generated_system}.  The $\delta t_c$ coefficient is the weighted channel and follows from \eqref{eq:single_balance}.  Conversely, substitution of \eqref{eq:single_generated_system} into \eqref{eq:single_first_variation} makes every coefficient vanish.
\end{proof}

Under the hypotheses of \cref{thm:constitutive_recovery}, the weighted selector coefficient
$\mu(v+\Div\bm j)$ in \eqref{eq:single_first_variation} already implies the local species balance for regular fields.  The independent multiplier direction $\ell$ is retained because it yields the same balance in $V^*$ without the strong regularity and isolated-zero-state assumptions.  It is therefore a weak completion of the conserved-variable channel, not a second conservation law.

For scalar endpoints in a closed port, \eqref{eq:sobolev_boundary_port} gives the exact history identity
\begin{equation}\label{eq:single_history_equivalence}
  \mathscr H_{\mu j}[t_c,t_s]
  =H_{D_{\rm d}}(t_c)-H_{D_{\rm d}}(t_s)
  =-\Gamma_{D_{\rm d}}(t_s,t_c).
\end{equation}
Thus the split-port history and the scalar spending history are two orientations of one stored process power.

The two directly reusable terminal formulas are present in the same functional:
\begin{equation}\label{eq:single_diffusion_rule}
\boxed{
  \delta\int_{t_0}^{t}\Div(\mu\bm j)\dd\tau
  =\mu\Div\bm j\,\delta t_c
   +\bm j\cdot\grad\mu\,\delta t_s,
}
\end{equation}
and, for a closed diffusion port with $D_{\rm d}$ from \eqref{eq:closed_scalar_spending_relation},
\begin{equation}\label{eq:single_spending_rule}
\boxed{
  \delta\int_{t_0}^{t}D_{\rm d}(\tau)\dd\tau
  =D_{\rm d}(t)\delta t_s-D_{\rm d}(t)\delta t_c.
}
\end{equation}
The histories in \eqref{eq:single_diffusion_rule} and \eqref{eq:single_spending_rule} describe the same closed process with opposite orientation; they are not two independently stored powers.

For the quadratic flux power densities
\begin{equation}\label{eq:single_quadratic_flux_powers}
  r_{\rm d}(x,\bm j)=\frac12\bm j\cdot\bm M^{-1}\bm j,
  \qquad
  r_{\rm q}(x,\bm q)=\frac12\bm q\cdot\bm K^{-1}\bm q,
\end{equation}
where $\bm M$ and $\bm K$ are symmetric positive definite, \eqref{eq:single_diffusion_law}--\eqref{eq:single_heat_law} give
\begin{equation}\label{eq:single_onsager_closures}
  \bm j=-\bm M\grad\mu,
  \qquad
  \bm q=-\bm K\grad\theta.
\end{equation}
Hence
\begin{equation}\label{eq:single_entropy_production}
  \theta\dot s+\Div\bm q
  =\grad\mu\cdot\bm M\grad\mu\ge0.
\end{equation}
For $\theta>0$, combination with Fourier conduction gives the usual nonnegative entropy-production density after rewriting the heat-flux divergence in terms of $\bm q/\theta$.

At finite energy, the conserved-variable channel in \eqref{eq:single_first_variation} is replaced by
\begin{equation}\label{eq:single_weak_channel}
  \dual{v}{\mu\delta t_c}_{V^*,V}
  +\dual{\mu\diamond\Div\bm j}{\delta t_c},
\end{equation}
and the entropy-side diffusion term remains
\begin{equation}\label{eq:single_weak_entropy_term}
  \int_\Omega\bm j\cdot\grad\mu\,\delta t_s\dd x.
\end{equation}
Here $\delta t_c$ is chosen from a multiplier class for which $\mu\delta t_c\in V$.  The independent multiplier gives $v+B\bm j=0$ in $V^*$.  For regular fields satisfying \cref{thm:constitutive_recovery}, the weighted selector gives the same balance directly.  Thus the augmented functional contains the physical weighted channel together with its weak completion.

\section{Cahn--Hilliard realization}\label{sec:ch}

Let $d\le3$, let $\kappa>0$, and let $F\in C^1(\R)$ be bounded below. Assume
\begin{equation}\label{eq:CH_growth}
  \abs{F'(r)}\le C_F(1+\abs r^q),
  \qquad 1\le q\le5,
\end{equation}
so that the weak nonlinear term is defined for $H^1(\Omega)$ fields in three dimensions. Define
\begin{equation}\label{eq:CH_energy}
  \Ecal_{\rm CH}(c)
  :=\int_\Omega\left(F(c)+\frac{\kappa}{2}\abs{\grad c}^2\right)\dd x.
\end{equation}
A weak chemical potential $\mu\in H^1(\Omega)$ satisfies
\begin{equation}\label{eq:CH_chemical_weak}
  \int_\Omega\mu\zeta\dd x
  =\int_\Omega F'(c)\zeta\dd x
   +\kappa\int_\Omega\grad c\cdot\grad\zeta\dd x
  \qquad\text{for every }\zeta\in H^1(\Omega).
\end{equation}
For smooth fields, this is
$\mu=F'(c)-\kappa\Delta c$ with $\partial_n c=0$.

Assume
\begin{equation}\label{eq:CH_trajectory_regularity}
  c\in W^{1,1}(0,T;H^1(\Omega)^*)
      \cap L^\infty(0,T;H^1(\Omega)),
  \qquad
  \mu\in L^2(0,T;H^1(\Omega)),
\end{equation}
and the chain rule
\begin{equation}\label{eq:CH_chain_rule}
  \frac{\dd}{\dd t}\Ecal_{\rm CH}(c(t))
  =\dual{\dot c(t)}{\mu(t)}_{H^1(\Omega)^*,H^1(\Omega)}
  \qquad\text{for a.e. }t.
\end{equation}
These hypotheses isolate the energy regularity required below; classical weak-solution results provide concrete sufficient assumptions, while energetic formulations with dynamic boundary effects illustrate the additional trace terms required for open interfaces \citep{ElliottZheng1986,ElliottGarcke1996,LiuWu2019}.

Let $\bm M\in L^\infty(\Omega;\R^{d\times d})$ be symmetric and uniformly elliptic:
\begin{equation}\label{eq:CH_mobility}
  m_0\abs{\xi}^2
  \le \xi\cdot\bm M(x)\xi
  \le m_1\abs{\xi}^2
\end{equation}
for some $0<m_0\le m_1$ and almost every $x$.  Set
\begin{equation}\label{eq:CH_flux_power}
  \Psi_{\rm CH}(\bm j)
  :=\frac12\int_\Omega\bm j\cdot\bm M^{-1}\bm j\dd x
\end{equation}
and
\begin{equation}\label{eq:CH_mixed_functional}
  \Lcal_{\rm CH}(v,\bm j,\lambda)
  :=\Psi_{\rm CH}(\bm j)
   +\dual{v}{\mu}
   -\dual{v+B\bm j}{\lambda}.
\end{equation}

\begin{theorem}[Cahn--Hilliard realization]\label{thm:CH_realization}
For a rate $v\in H^1(\Omega)^*$, the following are equivalent:
\begin{enumerate}[label=\textup{(\roman*)}]
  \item there exist $\bm j\in L^2(\Omega;\R^d)$ and $\lambda\in H^1(\Omega)$ stationary for \eqref{eq:CH_mixed_functional};
  \item
  \begin{subequations}\label{eq:CH_mixed_system}
  \begin{align}
    v+B\bm j&=0 &&\text{in }H^1(\Omega)^*,\label{eq:CH_balance}\\
    \bm j&=-\bm M\grad\mu &&\text{in }L^2(\Omega;\R^d),\label{eq:CH_flux}\\
    \lambda&=\mu &&\text{in }H^1(\Omega);\label{eq:CH_lambda}
  \end{align}
  \end{subequations}
  \item
  \begin{equation}\label{eq:CH_weak_rate}
    \dual{v}{\eta}
    +\int_\Omega\bm M\grad\mu\cdot\grad\eta\dd x=0
    \qquad\text{for every }\eta\in H^1(\Omega).
  \end{equation}
\end{enumerate}
For smooth fields and $v=\dot c$, this gives
\begin{equation}\label{eq:CH_strong_system}
  \dot c=\Div(\bm M\grad\mu),
  \qquad
  \mu=F'(c)-\kappa\Delta c,
\end{equation}
with $\bm M\grad\mu\cdot\bm n=0$ and $\partial_n c=0$.
\end{theorem}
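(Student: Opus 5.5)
The plan is to reduce everything to \cref{thm:mixed_stationarity} with the specific choice $\Psi=\Psi_{\rm CH}$ on $Y=L^2(\Omega;\R^d)$ and $V=H^1(\Omega)$. First I would check the hypotheses of that theorem. By \eqref{eq:CH_mobility}, $\bm M^{-1}$ exists almost everywhere and satisfies $m_1^{-1}\abs\xi^2\le\xi\cdot\bm M^{-1}\xi\le m_0^{-1}\abs\xi^2$. Hence $\Psi_{\rm CH}$ is a finite, convex, continuous quadratic form on $Y$ with $\Psi_{\rm CH}(0)=0$, so it is proper, convex, and lower semicontinuous. It is also Fréchet differentiable, with $\partial\Psi_{\rm CH}(\bm j)=\{\bm M^{-1}\bm j\}$. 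This singleton formula is the only point needing care: it holds because a Gâteaux-differentiable convex function has its gradient as its only subgradient, and it makes the contact condition \eqref{eq:mixed_flux} single-valued.

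For (i)$\Leftrightarrow$(ii), stationarity of \eqref{eq:CH_mixed_functional} means exactly \eqref{eq:mixed_stationarity} with this $\Psi$. \Cref{thm:mixed_stationarity} converts these conditions into $\lambda=\mu$, $v+B\bm j=0$, and $-G\mu\in\partial\Psi_{\rm CH}(\bm j)$. The last condition reads $-\grad\mu=\bm M^{-1}\bm j$, which is equivalent to $\bm j=-\bm M\grad\mu$ pointwise almost everywhere and therefore in $Y$. Every step is reversible, so given (ii) the same triple is stationary and (i) holds.

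For (ii)$\Rightarrow$(iii), substitute \eqref{eq:CH_flux} into the definition \eqref{eq:B_operator} of $B$. This gives $\dual{v}{\eta}=\int_\Omega \bm j\cdot\grad\eta\dd x=-\int_\Omega\bm M\grad\mu\cdot\grad\eta\dd x$, which is \eqref{eq:CH_weak_rate}. For (iii)$\Rightarrow$(ii), define $\bm j:=-\bm M\grad\mu$, which lies in $Y$ because $\bm M\in L^\infty$ and $\mu\in H^1$, and set $\lambda:=\mu$. Then \eqref{eq:CH_weak_rate} is precisely \eqref{eq:CH_balance}. So (iii) determines the pair $(\bm j,\lambda)$ and gives existence in (i).

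For the smooth statement, take $v=\dot c$ and smooth $\mu$ and $\bm M$. Integrating \eqref{eq:CH_weak_rate} by parts gives $\int_\Omega(\dot c-\Div(\bm M\grad\mu))\eta\dd x+\int_{\partial\Omega}(\bm M\grad\mu\cdot\bm n)\eta\dd S=0$ for all $\eta\in H^1(\Omega)$. Testing first with $\eta\in C_c^\infty(\Omega)$ yields the bulk equation. Then arbitrary traces yield $\bm M\grad\mu\cdot\bm n=0$. The same two-stage argument applied to \eqref{eq:CH_chemical_weak} gives $\mu=F'(c)-\kappa\Delta c$ and $\partial_nc=0$.

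I do not expect a real obstacle. The step to state carefully is the identification of the subdifferential of $\Psi_{\rm CH}$ as the single point $\bm M^{-1}\bm j$, which is what makes the variational inequality \eqref{eq:mixed_stationarity_j} equivalent to a flux law holding as an equality.
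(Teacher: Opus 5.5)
Your proposal is correct and follows the paper's proof. It obtains (i)$\Leftrightarrow$(ii) from \cref{thm:mixed_stationarity}, obtains (ii)$\Leftrightarrow$(iii) by substituting the flux law into $B$ and, conversely, setting $\bm j:=-\bm M\grad\mu$ and $\lambda:=\mu$, and obtains the strong form by integration by parts. Your explicit identification $\partial\Psi_{\rm CH}(\bm j)=\{\bm M^{-1}\bm j\}$ from uniform ellipticity is a step the paper leaves implicit when it reads the subgradient inequality as the equality \eqref{eq:CH_flux}.
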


\begin{proof}
The equivalence of (i) and (ii) follows from \cref{thm:mixed_stationarity}.  Substitution of \eqref{eq:CH_flux} into \eqref{eq:CH_balance} gives \eqref{eq:CH_weak_rate}; the converse is obtained by defining $\bm j$ through \eqref{eq:CH_flux} and setting $\lambda=\mu$.  The strong form follows whenever the displayed derivatives and traces exist.
\end{proof}

\begin{corollary}[Mass, energy, and terminal channels]\label{cor:CH_structure}
Let $v=\dot c$ and assume \eqref{eq:CH_trajectory_regularity}--\eqref{eq:CH_chain_rule} and the equivalent conditions of \cref{thm:CH_realization}. Then
\begin{equation}\label{eq:CH_mass_conservation}
  \dual{\dot c(t)}{1}=0
  \qquad\text{for a.e. }t,
\end{equation}
and
\begin{equation}\label{eq:CH_energy_balance}
  \Ecal_{\rm CH}(c(t))
  +\int_s^t\int_\Omega
     \grad\mu\cdot\bm M\grad\mu\dd x\dd\tau
  =\Ecal_{\rm CH}(c(s)).
\end{equation}
At almost every time,
\begin{equation}\label{eq:CH_weak_port}
  \Div(\mu\bm j)
  =\mu\diamond\Div\bm j+\bm j\cdot\grad\mu
  \qquad\text{in }\Dcal'(\Omega),
\end{equation}
with
\begin{equation}\label{eq:CH_power_density}
  -\bm j\cdot\grad\mu
  =\grad\mu\cdot\bm M\grad\mu\ge0.
\end{equation}
Consequently, the local diffusion-port rule is \eqref{eq:weak_channel_differential}, and the closed scalar spending rule is
\begin{equation}\label{eq:CH_spending_rule}
  \delta\int_{t_0}^{t}D_{\rm CH}(\tau)\dd\tau
  =D_{\rm CH}(t)(\delta t_s-\delta t_c),
  \qquad
  D_{\rm CH}(t):=\int_\Omega\grad\mu\cdot\bm M\grad\mu\dd x.
\end{equation}
\end{corollary}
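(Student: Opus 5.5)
The plan is to read each assertion of \cref{cor:CH_structure} as a specialization of an earlier general result, once the closed Cahn--Hilliard system is fixed. Take $v=\dot c$. By \cref{thm:CH_realization}, at almost every time we have $\dot c+B\bm j=0$ in $H^1(\Omega)^*$ and $\bm j=-\bm M\grad\mu$ in $L^2(\Omega;\R^d)$. The trajectory regularity \eqref{eq:CH_trajectory_regularity} gives $\mu(t)\in H^1(\Omega)$ for a.e.\ $t$. Together with the ellipticity bound \eqref{eq:CH_mobility}, this places $(\mu(t),\bm j(t))$ in the finite-energy class \eqref{eq:finite_energy_fields}.

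\textbf{Mass conservation.} Test \eqref{eq:CH_balance} with $\eta=1\in H^1(\Omega)$. Since $G1=0$, the definition \eqref{eq:B_operator} gives $\dual{B\bm j}{1}=0$, and hence \eqref{eq:CH_mass_conservation}, exactly as in \eqref{eq:mass_conservation_weak}.

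\textbf{Energy balance.} I will apply \cref{prop:energy_power_equivalence} with $V=H^1(\Omega)$, $\Ecal=\Ecal_{\rm CH}$ and $\Psi_t=\Psi_{\rm CH}$. The chain rule \eqref{eq:energy_chain_rule} is assumed as \eqref{eq:CH_chain_rule}, and the balance \eqref{eq:trajectory_balance} is \eqref{eq:CH_balance}. The constitutive contact \eqref{eq:constitutive_contact} also holds. Indeed, $\Psi_{\rm CH}$ is a quadratic form that is bounded and coercive on $Y$ by \eqref{eq:CH_mobility}, so $\partial\Psi_{\rm CH}(\bm j)=\{\bm M^{-1}\bm j\}$. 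This equals $-\grad\mu=\bm e$ by \eqref{eq:CH_flux}.

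It remains to identify the dissipation. The conjugate is $\Psi_{\rm CH}^*(\bm e)=\frac12\int\bm e\cdot\bm M\bm e$, so
\[
D=\Psi_{\rm CH}(\bm j)+\Psi_{\rm CH}^*(\bm e)=\int_\Omega\grad\mu\cdot\bm M\grad\mu\dd x .
\]
This is integrable in time because $\mu\in L^2(0,T;H^1)$ and $\bm M\in L^\infty$. Absolute continuity of $\Ecal_{\rm CH}(c(\cdot))$ is part of the chain-rule hypothesis, so the integrability requirement of the proposition is met. The proposition then yields \eqref{eq:CH_energy_balance}. The integrability verification is the only point requiring care, and it is settled by the $L^2(0,T;H^1)$ bound together with the upper bound $m_1$ in \eqref{eq:CH_mobility}.

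\textbf{Weak port and power density.} The identity \eqref{eq:CH_weak_port} is the weak Leibniz rule \eqref{eq:weak_leibniz}, which holds by \cref{def:diamond} for any $\mu\in H^1$ and $\bm j\in L^2$. The density \eqref{eq:CH_power_density} is obtained by substituting $\bm j=-\bm M\grad\mu$ pointwise a.e. Its nonnegativity follows from the lower bound in \eqref{eq:CH_mobility}, as in \eqref{eq:diffusion_power_density}. The local diffusion-port rule is then the finite-energy channel differential \eqref{eq:weak_channel_differential}, which needs no further argument.

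\textbf{Scalar spending rule.} Set $D_{\rm CH}\in L^1(t_0,T)$. I will apply \cref{thm:calibrated_spending} through the boxed notation \eqref{eq:spending_rule_citable}, at Lebesgue points of $D_{\rm CH}$, i.e.\ almost every $t$. This gives \eqref{eq:CH_spending_rule}. For consistency with \eqref{eq:process_power_pairing}, note that $D_{\rm CH}=-\int\bm j\cdot\grad\mu=D_{\rm d}$. The closed-port relation \eqref{eq:closed_scalar_spending_relation} has the opposite orientation, and this matches the remark following \eqref{eq:single_history_equivalence}.
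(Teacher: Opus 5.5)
Your proposal is correct and follows the paper's proof step by step: mass conservation by testing with $1$, the weak port and power density from \cref{def:diamond} together with the flux law \eqref{eq:CH_flux}, and the spending rule from \cref{thm:calibrated_spending} applied to $D_{\rm CH}$. The only deviation is the energy balance, which you route through \cref{prop:energy_power_equivalence} by computing $\partial\Psi_{\rm CH}$ and $\Psi_{\rm CH}^*$, whereas the paper simply tests \eqref{eq:CH_weak_rate} by $\mu$ and combines the result with \eqref{eq:CH_chain_rule}; this is the same computation without the Fenchel layer, and both versions need $t\mapsto\Ecal_{\rm CH}(c(t))$ to be absolutely continuous in order to integrate.
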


\begin{proof}
Test \eqref{eq:CH_weak_rate} by $1$ to obtain \eqref{eq:CH_mass_conservation}, and by $\mu$ to obtain the differential form of \eqref{eq:CH_energy_balance}.  Equations \eqref{eq:CH_weak_port}--\eqref{eq:CH_power_density} follow from \cref{def:diamond} and \eqref{eq:CH_flux}.  The terminal statement is \cref{thm:calibrated_spending} applied to $D_{\rm CH}$.
\end{proof}

For a strong Cahn--Hilliard trajectory, the weighted selector alone recovers the local balance whenever an open zero-potential state is isolated in the admissible mass and boundary class, as specified in \cref{thm:constitutive_recovery}.  The mixed formulation \eqref{eq:CH_mixed_functional} remains valid without this isolation condition and is the appropriate finite-energy statement.

The same equations arise from the isothermal restriction of \eqref{eq:single_functional}: set $\theta=\theta_0$, $\bm q=0$, $\mathcal F=\Ecal_{\rm CH}$, and use the quadratic flux power density in \eqref{eq:CH_flux_power}.  The storage direction gives \eqref{eq:CH_chemical_weak}, the mixed directions give \eqref{eq:CH_mixed_system}, and the history selectors give \eqref{eq:single_diffusion_rule} and \eqref{eq:CH_spending_rule}.  This adds calibrated terminal routing and a finite-energy port interpretation to the standard mixed Cahn--Hilliard structure \citep{Gurtin1996,MieheHildebrandBoger2014}.

For the quartic double well
\begin{equation}\label{eq:quartic_energy}
  F(c)=\frac14(c^2-1)^2,
\end{equation}
relation \eqref{eq:CH_chemical_weak} becomes
\begin{equation}\label{eq:quartic_chemical_potential}
  \int_\Omega\mu\zeta\dd x
  =\int_\Omega c(c^2-1)\zeta\dd x
  +\kappa\int_\Omega\grad c\cdot\grad\zeta\dd x.
\end{equation}
For constant isotropic mobility, elimination of $\mu$ gives the classical fourth-order Cahn--Hilliard equation \citep{CahnHilliard1958,GurtinPolignoneVinals1996}.

\section{Discussion and conclusion}\label{sec:conclusion}

The analysis identifies the mathematical content of the two channel rules used in path-dependent entropic Lagrangians.  Endpoint calibration and cocycle additivity determine the oriented spending differential, while local selector fields provide the test-function localization required for field equations.  The diffusion port has an $H(\Div)$ form with normal-trace boundary power and a finite-energy form based on $\mu\diamond\Div\bm j$.

For regular diffusion models, the weighted species channel yields local conservation when every persistent zero-potential state is dynamically isolated in the admissible state class.  Local energies with discrete critical states satisfy this condition directly; gradient energies satisfy it at isolated coercive equilibria after the boundary and mass constraints are included.  The independent multiplier extends the balance to the natural weak space when these strong-solution hypotheses are unavailable.

The synchronized thermo-diffusion functional places storage conjugacy, local conservation, flux laws, entropy production, and terminal routing in one directional stationarity structure.  The Cahn--Hilliard specialization verifies the construction at finite-energy regularity.  Subsequent applications can invoke the spending rule, the diffusion-port rule, and the constitutive recovery theorem after specifying the selector class, boundary port, and the structure of zero-potential states.

\appendix

\section{Auxiliary multichannel extension}\label{app:multichannel}

Let $D_e\in L^1(t_0,T)$ have primitives $H_e$, and let $A\in\R^{m\times q}$ satisfy
\begin{equation}\label{eq:incidence_zero_sum}
  \bm 1^{\mathsf T}A=0.
\end{equation}
Define
\begin{equation}\label{eq:incidence_extension}
  \Gamma_A(\tau_1,\ldots,\tau_m)
  :=\sum_{e=1}^q\sum_{\alpha=1}^m
  A_{\alpha e}H_e(\tau_\alpha).
\end{equation}
At a common Lebesgue point $t$,
\begin{equation}\label{eq:incidence_differential}
  D\Gamma_A(t\bm1)[\bm\xi]
  =(A\bm D(t))\cdot\bm\xi,
  \qquad
  D\Gamma_A(t\bm1)[\xi\bm1]=0.
\end{equation}
Thus each column assigns one process power to a zero-sum family of terminal channels.  This is the direct multichannel extension of \eqref{eq:spending_rule_citable}; graph incidence matrices provide a standard realization \citep{vanDerSchaftMaschke2002}.

\section{Bounded-variation extension}\label{app:bvdm}

For $\mu\in BV(\Omega)\cap L^\infty(\Omega)$ and
$\bm j\in DM^\infty(\Omega)$, the Anzellotti pairing $(\bm j,D\mu)$ is the Radon measure defined distributionally by
\begin{equation}\label{eq:anzellotti_pairing}
  \dual{(\bm j,D\mu)}{\varphi}
  :=-\int_\Omega\mu^*\varphi\dd(\Div\bm j)
    -\int_\Omega\mu\bm j\cdot\grad\varphi\dd x,
  \qquad \varphi\in C_c^1(\Omega).
\end{equation}
It satisfies
\begin{equation}\label{eq:anzellotti_product}
  \Div(\mu\bm j)
  =\mu^*\Div\bm j+(\bm j,D\mu),
  \qquad
  \abs{(\bm j,D\mu)}
  \le\norm{\bm j}_{L^\infty}\abs{D\mu}.
\end{equation}
The absolutely continuous part is $\bm j\cdot\grad\mu\,\mathcal L^d$, while the singular part retains interfacial power on jump and Cantor sets \citep{Anzellotti1983,ChenFrid1999,ChenComiTorres2019,ComiCrastaDeCiccoMalusa2024}.  This extension is auxiliary to the $H^1\times L^2$ theory used in the main text; the underlying $BV$ notation and fine properties are as in \citet{AmbrosioFuscoPallara2000}.

\section*{Data availability}
No data were generated or analyzed in this theoretical study.

\section*{Declaration of competing interest}
The author declares no competing interest.

\printbibliography

\end{document}